\tikzset{
  circ/.style = {circle,draw,fill,inner sep=1pt},
  invisible/.style = {circle,draw=none,inner sep=0pt,font=\tiny}
}
\spnewtheorem{observation}{Observation}{\bfseries}{\itshape}
\spnewtheorem{Claim}{Claim}{\itshape}{\rmfamily}
\newenvironment{cproof}[1]{\par\indent{\textit{Proof.}}\space#1}{\hfill $\diamondsuit$}
\newcommand{\qe}{\tag*{$\square$}}
\title{On Contact Graphs of Paths on a Grid}
\author{Zakir Deniz\inst{1}
\and Esther Galby\inst{2} \and Andrea Munaro\inst{2} \and Bernard Ries\inst{2}}
\institute{Duzce University, Department of Mathematics, Duzce, Turkey, \texttt{zakirdeniz@duzce.edu.tr} 
\and University of Fribourg, Department of Informatics, Decision Support $\&$ Operations Research,
Fribourg, Switzerland,\\ \texttt{esther.galby@unifr.ch, andrea.munaro@unifr.ch, bernard.ries@unifr.ch}}
\begin{document}
\maketitle
\setcounter{footnote}{0}
\parindent=0cm

\begin{abstract}
In this paper we consider \textit{Contact graphs of Paths on a Grid} (\textit{CPG graphs}), i.e. graphs for which there exists a family of interiorly disjoint paths on a grid in one-to-one correspondence with their vertex set such that two vertices are adjacent if and only if the corresponding paths touch at a grid-point. Our class generalizes the well studied class of VCPG graphs (see \cite{felsner1}). We examine CPG graphs from a structural point of view which leads to constant upper bounds on the clique number and the chromatic number. Moreover, we investigate the recognition and 3-colorability problems for $B_0$-CPG, a subclass of CPG. We further show that CPG graphs are not necessarily planar and not all planar graphs are CPG. 
\end{abstract}


\section{Introduction}

Asinowski et al. \cite{asinowski} introduced the class of \textit{vertex intersection graphs of paths on a grid}, referred to as \textit{VPG graphs}. An undirected graph $G=(V,E)$ is called a \textit{VPG graph} if one can associate a path on a grid with each vertex such that two vertices are adjacent if and only if the corresponding paths intersect on at least one grid-point. It is not difficult to see that the class of VPG graphs coincides with the class of string graphs, i.e. intersection graphs of curves in the plane~(see~\cite{asinowski}).

A natural restriction which was forthwith considered consists in limiting the number of \textit{bends} (i.e. $90$ degrees turns at a grid-point) that the paths may have: an undirected graph $G=(V,E)$ is a \textit{$B_k$-VPG graph}, for some integer $k\geq 0$, if one can associate a path on a grid having at most $k$ \textit{bends} with each vertex such that two vertices are adjacent if and only if the corresponding paths intersect on at least one grid-point. Since their introduction, $B_k$-VPG have been extensively studied (see \cite{alcon,asinowski,chaplick,chaplick12,cohen1,cohen2,Felsner,francis,Golumbic1,gonca,heldt1}).

A notion closely related to intersection graphs is that of \textit{contact graphs}. Such graphs can be seen as a special type of intersection graphs of geometrical objects in which these objects are not allowed to have common interior points but only to touch each other. Contact graphs of various types of objects have been studied in the literature (see, e.g., \cite{felsner1,castro,Fraysseix1,Hlineny,Hlineny1bis,Hlineny1}). In this paper, we consider \textit{Contact graphs of Paths on a Grid} (\textit{CPG graphs} for short) which are defined as follows. A graph $G$ is a \textit{CPG graph} if the vertices of $G$ can be represented by a family of interiorly disjoint paths on a grid, two vertices being adjacent in $G$ if and only if the corresponding paths touch, i.e. share a grid-point which is an endpoint of at least one of the two paths (see Fig.~\ref{contact}). Note that this class is hereditary, i.e. closed under vertex deletion. Similarly to VPG, a $B_k$-CPG graph is a CPG graph admitting a representation in which each path has at most $k$ bends. Clearly, any $B_k$-CPG graph is also a $B_k$-VPG graph.

\begin{figure}[htb]
\centering
\begin{subfigure}{.5\textwidth}
\centering
\begin{tikzpicture}
\draw[thick] (0,-.5)--(0,.5);
\draw[dotted,thick] (0,-1)--(0,-.5);
\draw[dotted,thick] (0,1)--(0,.5);
\draw[->,thick,>=stealth] (-.75,0)--(0,0);
\draw[dotted,thick] (-1.25,0)--(-.75,0);

\draw[dotted,thick] (.5,0) -- (1,0);
\draw[->,thick,>=stealth] (1,0) -- (1.75,0);
\draw[<-,thick,>=stealth] (1.75,0) -- (2.5,0);
\draw[dotted,thick] (2.5,0) -- (3,0);
\end{tikzpicture}
\caption{Allowed contacts.}
\label{contact1}
\end{subfigure}
\hspace*{1cm}
\begin{subfigure}{.4\textwidth}
\centering
\begin{tikzpicture}
\draw[thick] (0,0) -- (.5,0) -- (.5,.5);
\draw[dotted,thick] (-.25,0) -- (0,0);
\draw[dotted,thick] (.5,.5) -- (.5,.75);
\draw[thick,lightgray] (1,0) -- (.5,0) -- (.5,-.5);
\draw[dotted,thick,lightgray] (1.25,0) -- (1,0);
\draw[dotted,thick,lightgray] (.5,-.5) -- (.5,-.75);
\node[invisible] at (0,-1.25) {};
\end{tikzpicture}
\caption{Forbidden contact.}
\label{noknockknee}
\end{subfigure}
\caption{Examples of types of contact between two paths (the endpoints of a path are marked by an arrow).}
\label{contact}
\end{figure}
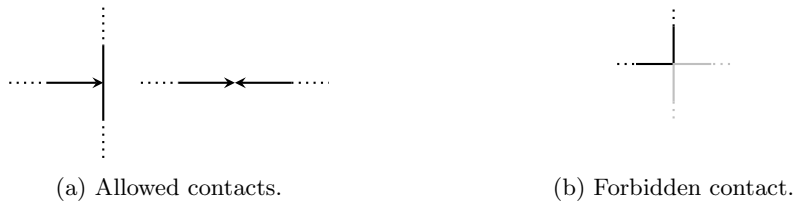 
Aerts and Felsner \cite{felsner1} considered a similar family of graphs, namely those admitting a \textit{Vertex Contact representation of Paths on a Grid} (\textit{VCPG} for short). The vertices of such graphs can be represented by a family of interiorly disjoint paths on a grid, but the adjacencies are defined slightly differently: two vertices are adjacent if and only if the endpoint of one of the corresponding paths touches an interior point of the other corresponding path (observe that this is equivalent to adding the constraint forbidding two paths from having a common endpoint, i.e. contacts as in Fig.~\ref{contact1} on the right). This class has been considered by other authors as well (see \cite{chaplick13,chaplick12,Felsner,gonca,kabourov}). 

It is not difficult to see that graphs admitting a VCPG are planar (see \cite{felsner1}) and it immediately follows from the definition that those graphs are CPG graphs. This containment is in fact strict even when restricted to planar CPG graphs, as there exist, in addition to nonplanar CPG graphs, planar graphs which are CPG but do not admit a VCPG. 

To the best of our knowledge, the class of CPG graphs has never been studied in itself and our present intention is to provide some structural properties (see Section \ref{sec:structure}). By considering a specific weight function on the vertices, we provide upper bounds on the number of edges in CPG graphs as well as on the clique number and the chromatic number (see Section \ref{sec:structure}). In particular, we show that $B_0$-CPG graphs are 4-colorable and that {\sc 3-colorability} restricted to $B_0$-CPG is $\mathsf{NP}$-complete (see Section \ref{sec:coloring}). We further prove that recognizing $B_0$-CPG graphs is $\mathsf{NP}$-complete. Additionally, we show that the classes of CPG graphs and planar graphs are incomparable (see Section~\ref{sec:planar}). 


\section{Preliminaries}
\label{sec:prelim}

Throughout this paper, all considered graphs are undirected, finite and simple. For any graph theoretical notion not defined here, we refer the reader to \cite{diestel}.

Let $G=(V,E)$ be a graph with vertex set $V$ and edge set $E$. The \textit{degree} of a vertex $v\in V$, denoted by $d(v)$, is the number of neighbors of $v$ in $G$. A graph $G$ is \textit{$k$-regular} if the degree of every vertex in $G$ is $k\geq 0$. A \textit{clique} (resp. \textit{stable set}) in $G$ is a set of pairwise adjacent (resp. nonadjacent) vertices. The graph obtained from $G$ by deleting a vertex $v\in V$ is denoted by $G - v$. For a given graph $H$, $G$ is \textit{$H$-free} if it contains no induced subgraph isomorphic to $H$. 

As usual, $K_n$ (resp. $C_n$) denotes the complete graph (resp. chordless cycle) on $n$ vertices and $K_{m,n}$ denotes the complete bipartite graph with bipartition $(V_1,V_2)$ such that $|V_1| = m$ and $|V_2| = n$. Given a graph $G$, the \textit{line graph of $G$}, denoted by $L(G)$, is the graph such that each vertex $v_e$ in $L(G)$ corresponds to an edge $e$ in $G$ and two vertices are adjacent in $L(G)$ if and only if their corresponding edges in $G$ have a common endvertex.

A graph $G$ is \textit{planar} if it can be drawn in the plane without crossing edges; such a drawing is then called a \textit{planar embedding} of $G$. A planar embedding divides the plane into several regions referred to as \textit{faces}. A planar graph is \textit{maximally planar} if adding any edge renders it nonplanar. A maximally planar graph has exactly $2n-4$ faces, where $n$ is the number of vertices in the graph. A graph $H$ is a \textit{minor} of a graph $G$, if $H$ can be obtained from $G$ by deleting edges and vertices and by contracting edges. It is well-known that a graph is planar if and only if it does not contain $K_5$ or $K_{3,3}$ as a minor \cite{diestel}.

A \textit{coloring} of a graph $G$ is a mapping $\mathbf{c}$ associating with every vertex $u$ an integer $\mathbf{c}(u)$, called a \textit{color}, such that $\mathbf{c}(v)\neq \mathbf{c}(u)$ for every edge $uv$. If at most $k$ distinct colors are used, $\mathbf{c}$ is called a \textit{$k$-coloring}. The smallest integer $k$ such that $G$ admits a $k$-coloring is called the \textit{chromatic number} of $G$, denoted by $\chi(G)$.

Consider a rectangular grid $\mathcal{G}$ where the horizontal lines are referred to as \textit{rows} and the vertical lines as \textit{columns}. The grid-point lying on row $x$ and column $y$ is denoted by $(x,y)$. An \textit{interior point} of a path $P$ on $\mathcal{G}$ is a point belonging to $P$ and different from its endpoints; the \textit{interior} of $P$ is the set of all its interior points. A graph $G=(V,E)$ is \textit{CPG} if there exists a collection $\mathcal{P}$ of interiorly disjoint paths on a grid $\mathcal{G}$ such that $\mathcal{P}$ is in one-to-one correspondence with $V$ and two vertices are adjacent in $G$ if and only if the corresponding paths touch; if every path in $\mathcal{P}$ has at most $k$ bends, $G$ is $B_k$-CPG. The pair $\mathcal{R} = (\mathcal{G},\mathcal{P})$ is a \textit{CPG representation} of $G$, and more specifically a \textit{$k$-bend CPG representation} if every path in $\mathcal{P}$ has at most $k$ bends. In the following, the path representing some vertex $u$ in a CPG representation $\mathcal{R}$ of a graph $G$ is denoted by $P_u^{\mathcal{R}}$, or simply $P_u$ if it is clear from the context.

Let $G=(V,E)$ be a CPG graph and $\mathcal{R} = (\mathcal{G},\mathcal{P})$ be a CPG representation of $G$. A grid-point $p$ is of \textit{type I} if it corresponds to an endpoint of four paths in $\mathcal{P}$ (see Fig. \ref{intpointI}), and of \textit{type II} if it corresponds to an endpoint of two paths in $\mathcal{P}$ and an interior point of a third path in $\mathcal{P}$ (see Fig. \ref{intpointII}).

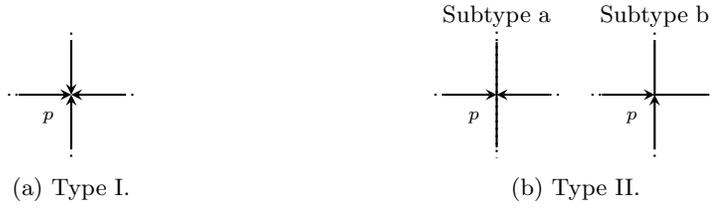
\begin{figure}[ht]
\centering  
\begin{subfigure}[b]{.45\textwidth}
\centering
\begin{tikzpicture}[scale=.7]
\node at (0,0) (p)  [label=below left:{\scriptsize $p$}]  {};
\draw[thick,<-,>=stealth] (0,0)--(0,1);
\draw[thick,dotted] (0,1)--(0,1.2);
\draw[thick,<-,>=stealth] (0,0)--(1,0);
\draw[thick,dotted] (1,0)--(1.2,0);
\draw[thick,<-,>=stealth] (0,0)--(0,-1);
\draw[thick,dotted] (0,-1)--(0,-1.2);
\draw[thick,<-,>=stealth] (0,0)--(-1,0);
\draw[thick,dotted] (-1.2,0)--(-1,0);
\end{tikzpicture}
\caption{Type I.}
\label{intpointI}
\end{subfigure}
\hspace*{1cm}
\begin{subfigure}[b]{.45\textwidth}
\centering
\begin{tikzpicture}[scale=.7]
\node at (0,0) (p)  [label=below left:{\scriptsize $p$}]  {};
\node[draw=none] at (0,1.5) {Subtype a};
\draw[thick,-,>=stealth] (0,1)--(0,-1);
\draw[thick,dotted] (0,1.2)--(0,-1.2);
\draw[thick,<-,>=stealth] (0,0)--(1,0);
\draw[thick,dotted] (1,0)--(1.2,0);
\draw[thick,<-,>=stealth] (0,0)--(-1,0);
\draw[thick,dotted] (-1,0)--(-1.2,0);

\node at (3,0) (p)  [label=below left:{\scriptsize $p$}]  {};
\node[draw=none] at (3,1.5) {Subtype b};
\draw[thick,-,>=stealth] (3,1)--(3,0)--(4,0);
\draw[thick,dotted] (3,1)--(3,1.2);
\draw[thick,dotted] (4,0)--(4.2,0);
\draw[thick,<-,>=stealth] (3,0)--(3,-1);
\draw[thick,dotted] (3,-1)--(3,-1.2);
\draw[thick,<-,>=stealth] (3,0)--(2,0);
\draw[thick,dotted] (1.8,0)--(2,0);
\end{tikzpicture}
\caption{Type II.}
\label{intpointII}
\end{subfigure}
\caption{Two types of grid-points.}
\label{intpoint}
\end{figure}

For any grid-point $p$, we denote by $\tau (p)$ the number of edges in the subgraph induced by the vertices whose corresponding paths contain or have $p$ as an endpoint. Note that this subgraph is a clique and so $\tau(p)=\binom{j}{2}$ if $j$ paths touch at grid-point $p$.

For any path $P$, we denote by $\mathring{P}$ (resp. $\partial(P)$) the interior (resp. endpoints) of $P$. For a vertex $u \in V$, we define the \textit{weight of $u$ with respect to} $\mathcal{R}$, denoted by $w_u^{\mathcal{R}}$ or simply $w_u$ if it is clear from the context, as follows. Let $q_u^i$ ($i=1,2$) be the endpoints of the corresponding path $P_u$ in $\mathcal{P}$ and consider, for $i=1,2$, 
\[w_u^i = |\{P \in \mathcal{P} ~|~ q_u^i \in \mathring{P}\}| + \frac{1}{2} \cdot |\{P \in \mathcal{P} ~|~ P \neq P_u \text{ and } q_u^i \in \partial(P)\}|.\]
Then $w_u = w_u^1 + w_u^2$.

\begin{observation}
\label{wui}
Let $G=(V,E)$ be a CPG graph and $\mathcal{R} = (\mathcal{G},\mathcal{P})$ be a CPG representation of $G$. For any vertex $u \in V$ and $i=1,2$, $w_u^i \leq \frac{3}{2}$ where equality holds if and only if $q_u^i$ is a grid-point of type I or II.
\end{observation}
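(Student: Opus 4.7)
The plan is a short combinatorial count at the grid-point $p := q_u^i$, based on two simple facts: only four grid-edges are incident to $p$, and each path in $\mathcal P$ passing through or ending at $p$ occupies a prescribed number of these edges. An interior traversal uses exactly two incident edges, an endpoint uses exactly one.

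First I would observe that because the paths of $\mathcal P$ are interiorly disjoint, at most one path can have $p$ as an interior point (otherwise $p$ would be a common interior point of two distinct paths). Let $a \in \{0,1\}$ be the number of such paths and let $b$ be the number of paths $P \neq P_u$ with $p \in \partial(P)$; then directly from the definition,
\[
w_u^i \;=\; a \;+\; \tfrac{1}{2}\,b.
\]
Each interior path at $p$ occupies $2$ grid-edges, and each path with $p$ as an endpoint (including $P_u$ itself, which ends at $p$) occupies exactly $1$. Since only $4$ grid-edges are incident to $p$, we obtain $1 + 2a + b \le 4$, that is $2a + b \le 3$, whence
\[
w_u^i \;=\; a + \tfrac{1}{2}b \;\le\; \tfrac{3}{2}.
\]

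For the equality characterization, $2a+b = 3$ with $a \in \{0,1\}$ admits only two integer solutions. The solution $(a,b) = (0,3)$ says that exactly four paths meet at $p$, all of them having $p$ as a common endpoint (namely $P_u$ together with three others), which is exactly the definition of a type-I grid-point. The solution $(a,b) = (1,1)$ says that exactly one path has $p$ in its interior while $P_u$ and one further path have $p$ as an endpoint, which is exactly the definition of a type-II grid-point (the two sub-types arising according to whether or not the interior path bends at $p$). The converse direction is immediate: reading off $a$ and $b$ from the definitions of types I and II gives $w_u^i = \tfrac{3}{2}$ in either case.

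I do not foresee any genuine obstacle here; the only step that warrants a sentence of justification is that interior-disjointness rules out the knock-knee configuration, i.e.\ two paths simultaneously having $p$ as an interior point, which is what pins $a$ down to $\{0,1\}$ and makes the edge-counting argument go through.
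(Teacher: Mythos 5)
Your proof is correct and follows essentially the same route as the paper, which justifies the observation with a one-sentence count of the four grid-edges incident to $q_u^i$; your version merely makes that edge-count explicit (interior-disjointness forces $a\le 1$, then $1+2a+b\le 4$ and the equality cases $(a,b)\in\{(0,3),(1,1)\}$ correspond exactly to types I and II). No gaps.
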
 

Indeed, the contribution of $q_u^i$ to $w_u^i$ is maximal if all four grid-edges containing $q_u^i$ are used by paths of $\mathcal{P}$, which may only happen when $q_u^i$ is a grid-point of type I or II. \\

\textit{Remark.} In fact, we have $w_u^i \in \{0,\frac{1}{2},1,\frac{3}{2}\}$ for any vertex $u \in V$ and $i=1,2$.

\begin{observation}
\label{lowerbound}
Let $G=(V,E)$ be a CPG graph and $\mathcal{R} = (\mathcal{G},\mathcal{P})$ be a CPG representation of $G$. Then \[|E| \leq \sum\limits_{u \in V} w_u,\] where equality holds if and only if all paths of $\mathcal{P}$ pairwise touch at most once. 
\end{observation}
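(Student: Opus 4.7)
The plan is to do a double-counting argument in which the contributions $w_u^i$ are attributed to the grid-point $q_u^i$ and then the sum $\sum_u w_u$ is re-assembled grid-point by grid-point. For every grid-point $p$ of the grid $\mathcal{G}$, let $a(p)$ denote the number of paths of $\mathcal{P}$ containing $p$ in their interior and $b(p)$ denote the number of paths of $\mathcal{P}$ having $p$ as an endpoint. Because the paths are interiorly disjoint, $a(p) \leq 1$ actually holds, but we will not need this; what matters is only the local count at $p$.

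Next, I would observe that whenever an endpoint $q_u^i$ of some path $P_u$ coincides with a grid-point $p$, the definition gives
\[
w_u^i \;=\; a(p) \;+\; \tfrac{1}{2}\bigl(b(p)-1\bigr).
\]
Summing this over the $b(p)$ endpoint-incidences at $p$, and then over all grid-points, yields
\[
\sum_{u \in V} w_u \;=\; \sum_{p}\, b(p)\!\left(a(p) + \tfrac{b(p)-1}{2}\right) \;=\; \sum_{p}\!\left(a(p)\,b(p) + \binom{b(p)}{2}\right).
\]

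Now I would interpret the right-hand side edge-theoretically. A pair of paths $\{P_u,P_v\}$ touches at $p$ exactly when $p$ is shared by both and is an endpoint of at least one of them: either both have an endpoint at $p$ (contributing $\binom{b(p)}{2}$ pairs) or one has $p$ as an endpoint while the other passes through $p$ in its interior (contributing $a(p)\,b(p)$ pairs). Hence $a(p)\,b(p) + \binom{b(p)}{2}$ is precisely the number of touching pairs at $p$, so $\sum_u w_u$ counts every adjacent pair $\{u,v\}$ of $G$ with multiplicity equal to the number of grid-points at which $P_u$ and $P_v$ touch.

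Since this multiplicity is at least $1$ for every edge of $G$, we obtain $|E| \leq \sum_{u\in V} w_u$, with equality if and only if every two touching paths of $\mathcal{P}$ share exactly one contact point, which is the stated characterization. I do not foresee any real obstacle: the only thing to be a little careful about is the $-1$ in the endpoint-count (which prevents double-counting the path $P_u$ itself at $q_u^i$), and the factor $\tfrac{1}{2}$, which is exactly what turns the endpoint-endpoint incidences at $p$ into the binomial $\binom{b(p)}{2}$.
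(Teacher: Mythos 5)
Your proof is correct and rests on the same double counting of contact contributions to the weights that the paper uses; you merely organize the count per grid-point, obtaining the exact identity $\sum_{u \in V} w_u = \sum_{uv \in E} \#\{\text{contact points of } P_u \text{ and } P_v\}$, whereas the paper argues edge by edge that each edge is accounted for by at least $1$ (fully in $w_u$ for an endpoint--interior contact, or by one half in each of $w_u$ and $w_v$ for a common endpoint). Your exact identity has the minor advantage of making the equality characterization completely explicit, but it is essentially the same argument.
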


Indeed, if $uv \in E$, we may assume that either an endpoint of $P_u$ touches the interior of $P_v$, or $P_u$ and $P_v$ have a common endpoint. In the first case, the edge $uv$ is fully accounted for in the weight of $u$, and in the second case, the edge $uv$ is accounted for in both $w_u$ and $w_v$ by one half. The characterization of equality then easily follows.


\section{Structural Properties of CPG Graphs}
\label{sec:structure}

In this section, we investigate CPG graphs from a structural point of view and present some useful properties which we will further exploit.

\begin{lemma}
\label{degree}
A CPG graph is either 6-regular or has a vertex of degree at most~5.
\end{lemma}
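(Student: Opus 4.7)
The plan is to use the weight machinery just introduced and bound the average degree of a CPG graph by~$6$.

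First, I would take an arbitrary CPG representation $\mathcal{R}=(\mathcal{G},\mathcal{P})$ of $G=(V,E)$ and combine the two observations already proven. By Observation~\ref{wui}, every vertex $u$ satisfies $w_u^i\le \tfrac{3}{2}$ for $i=1,2$, so $w_u\le 3$. Plugging this into Observation~\ref{lowerbound} gives
\[
|E|\;\le\;\sum_{u\in V} w_u\;\le\;3|V|.
\]
Consequently $\sum_{u\in V} d(u) = 2|E|\le 6|V|$, i.e.\ the average degree of $G$ is at most~$6$.

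From this inequality the dichotomy is immediate. Indeed, suppose $G$ is not $6$-regular; I need to exhibit a vertex of degree at most $5$. If, towards a contradiction, every vertex had $d(u)\ge 6$, then because $G$ is not $6$-regular at least one vertex would satisfy $d(u)\ge 7$, and summing would yield $\sum_{u\in V} d(u) > 6|V|$, contradicting the bound above. Hence some vertex must have degree at most $5$, which completes the proof.

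There is essentially no obstacle here: the whole argument is a two-line averaging, and all the geometric content has already been absorbed into Observations~\ref{wui} and~\ref{lowerbound}. The only thing worth double-checking is that the non-$6$-regular case really forces a vertex of degree $\le 5$ rather than just one of degree $\ne 6$, which is handled by the contradiction argument above.
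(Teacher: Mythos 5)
Your proof is correct and is essentially the paper's own argument: both combine Observations~\ref{wui} and~\ref{lowerbound} to get $2|E|\le 6|V|$ and then read off the dichotomy from the average-degree bound. You merely spell out the final averaging step a bit more explicitly than the paper does.
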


\begin{proof}
If $G=(V,E)$ is a CPG graph and $\mathcal{R}$ is a CPG representation of $G$, by combining Observations \ref{wui} and \ref{lowerbound}, we obtain 
\[\sum\limits_{u \in V} d(u) = 2|E| \leq 2\sum\limits_{u \in V} w_u \leq 2\sum\limits_{u \in V} \bigg( \frac{3}{2} + \frac{3}{2}\bigg) = 6|V|. \qe \]
\end{proof}

\textit{Remark.} We can show that there exists an infinite family of 6-regular CPG graphs. Due to lack of space, this proof is here omitted but can be found in Section \ref{app:6reg} of the Appendix.\\

For $B_1$-CPG graphs, we can strengthen Lemma \ref{degree} as follows.

\begin{proposition}
\label{deg5}
Every $B_1$-CPG graph has a vertex of degree at most 5.
\end{proposition}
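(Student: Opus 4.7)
The plan is to refine Lemma~\ref{degree} by ruling out the existence of a 6-regular $B_1$-CPG graph; combined with that lemma, this immediately yields the statement. Suppose for contradiction that $G$ is a 6-regular $B_1$-CPG graph and fix a $B_1$-CPG representation $\mathcal{R} = (\mathcal{G}, \mathcal{P})$. Since $\sum_{u \in V} d(u) = 6|V|$, every inequality in the proof of Lemma~\ref{degree} must be tight, so by Observations~\ref{wui} and~\ref{lowerbound} we must have $w_u^i = \tfrac{3}{2}$ for every vertex $u$ and every $i \in \{1,2\}$. Thus each path-endpoint in $\mathcal{P}$ is a grid-point of type I or type II; inspecting Figure~\ref{intpoint}, such a point has all four incident grid-edges used by paths of $\mathcal{P}$.

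Next I would run an extremality argument on $\mathcal{R}$. Let $p$ be the topmost used grid-point of $\mathcal{G}$, breaking ties by taking the leftmost such point on this top used row. By choice of $p$, no path of $\mathcal{P}$ can use either the grid-edge above $p$ or the one immediately to its left. Consequently, at most two of the four grid-edges incident to $p$ (the right and down ones) can be used, so $p$ can be neither of type I nor of type II. By the previous paragraph, this rules out $p$ being the endpoint of any path of $\mathcal{P}$.

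Therefore $p$ must lie in the interior of some path $P_v \in \mathcal{P}$, unique by interior-disjointness. Since $p$ is interior to $P_v$, the path $P_v$ must use two grid-edges at $p$, and the only possibility is that it uses both the right and the down grid-edges; in particular, $P_v$ bends at $p$. At this point the $B_1$ assumption becomes essential: $P_v$ cannot bend a second time, so its horizontal leg extends straight from $p$ to a true endpoint $q$ of $P_v$ lying on the topmost used row. But then $q$ also has to be of type I or type II, which is impossible since the grid-edge above $q$ cannot be used. This contradicts the tightness of the weight inequalities and completes the argument.

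The main obstacle is ensuring that the horizontal leg of $P_v$ terminates at a genuine path-endpoint rather than at an interior bend; this is exactly what the $B_1$ hypothesis buys us, and it is the reason the argument does not extend verbatim to $B_k$-CPG for $k \ge 2$, where the horizontal leg could bend down before reaching an endpoint of $P_v$.
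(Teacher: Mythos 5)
Your proof is correct. It belongs to the same family as the paper's argument --- an extremal-corner argument built on the weight function of Observations~\ref{wui} and~\ref{lowerbound} together with the 1-bend hypothesis --- but the route is genuinely different in its organization and in where $B_1$ is used. The paper argues directly, with no contradiction and no appeal to Lemma~\ref{degree}: it picks $p$ to be the upper-most among the left-most \emph{endpoints} of paths, uses the 1-bend hypothesis to conclude that no path can use the grid-edge to the left of $p$ (with two or more bends a path could enter from the left and come back, which is exactly what $B_1$ forbids), so this endpoint contributes at most $1$ instead of $\tfrac{3}{2}$ to the weight of its vertex, giving $2|E| \le 6|V|-1$ at once. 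You instead reduce, via Lemma~\ref{degree}, to excluding a $6$-regular $B_1$-CPG graph, use tightness of the weight inequalities to force every endpoint to be of type I or II, take the topmost/leftmost \emph{used grid-point} (which is then forced to be a bend-point of some path), and invoke the 1-bend hypothesis at a different step: the horizontal leg issuing from that bend must terminate at a genuine endpoint on the top row, whose upper grid-edge is unused, contradicting the type I/II requirement. Both uses of $B_1$ serve the same purpose of preventing a path from ``coming back,'' and your version is a valid, slightly longer alternative; the paper's version has the minor advantages of not needing Lemma~\ref{degree} or the regularity reduction, and of yielding the explicit bound $|E| \le 3|V| - \tfrac{1}{2}$ for every $B_1$-CPG graph.
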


\begin{proof}
Let $G=(V,E)$ be a $B_1$-CPG graph and $\mathcal{R}$ be a $1$-bend CPG representation of $G$. Denote by $p$ the upper-most endpoint of a path among the left-most endpoints in $\mathcal{R}$, and by $P_x$ (with $x \in V$) an arbitrary path having $p$ as an endpoint. Since $\mathcal{R}$ is a $1$-bend CPG representation, no path uses the grid-edge on the left of $p$, for otherwise $p$ would not be a left-most endpoint. Therefore, $p$ contributes to the weight of $x$ with respect to $\mathcal{R}$ by at most $1$ and, by Observations \ref{wui} and \ref{lowerbound}, we have
\[\sum\limits_{u \in V} d(u) = 2|E| \leq 2 (w_x + \sum\limits_{u \neq x} w_u) \leq 6|V| - 1,\]
which implies the existence of a vertex of degree at most 5. \qed
\end{proof}

A natural question that arises when considering CPG graphs is whether they may contain large cliques. It immediately follows from Observation \ref{lowerbound} that CPG graphs cannot contain $K_n$, for $n\geq 8$. This can be further improved as shown in the next result.

\begin{theorem}
\label{K7}
CPG graphs are $K_7$-free.
\end{theorem}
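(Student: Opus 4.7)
The plan is to work from the weight bound at equality. Suppose for contradiction that $K_7$ has a CPG representation $\mathcal{R}$. Combining Observations~\ref{wui} and~\ref{lowerbound}, $21 = |E(K_7)| \leq \sum_{u \in V} w_u \leq 7 \cdot 3 = 21$, so every inequality is tight. Thus every path endpoint is a grid-point of type I or II, and every pair of paths touches exactly once. Let $a$ and $b$ count type I and type II grid-points; counting endpoint occurrences yields $4a + 2b = 14$, i.e.\ $2a + b = 7$. For a single path of degree $6$, the contributions $3$ per type I endpoint, $2$ per type II endpoint, and $2$ per type II interior pass give $3 t_1 + 2 t_2 + 2 s = 6$ with $t_1 + t_2 = 2$, which simplifies to $t_1 + 2s = 2$. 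The only nonnegative integer solutions are $(t_1, s) \in \{(2,0),(0,1)\}$, so every path is of type (A) (both endpoints type I, no type II pass) or type (B) (both endpoints type II, one type II pass). Matching endpoints yields $n_A = 2a$ and $n_B = b$, hence $(a,b) \in \{(0,7),(1,5),(2,3),(3,1)\}$.

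Three of these four values of $a$ admit quick refutations. If $a = 3$ and $b = 1$, the unique type (B) path needs two distinct type II endpoints but only one type II grid-point exists. If $a = 2$ and $b = 3$, each type II point hosts three paths, and with only three type (B) paths available, each of the three pairs among them touches at all three points, contradicting the single-touch condition. If $a = 1$ and $b = 5$, the single type I point would carry all four type I endpoints of the two type (A) paths, forcing each such path to have its two endpoints at the same grid-point, which is impossible for a simple path.

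The remaining case $(a,b) = (0,7)$ has no type I points and a Fano-plane-like incidence: each of the seven type II points carries three of the seven paths (two ending, one passing), and each pair of paths touches exactly once. I would eliminate this case by a corner-extremal argument. The topleftmost used grid-point $p$ of $\mathcal{R}$ has its left and up grid-edges unused, hence at most two grid-edges are incident to $p$; this violates the four-edge requirement of types I and II, so $p$ is not an endpoint but merely a bend interior to a single path. The same reasoning shows that no endpoint lies in the topmost row, leftmost column, bottommost row, or rightmost column of the used region, so all endpoints sit strictly inside the bounding box. Next, pick the topmost endpoint $q$ in the leftmost column that contains any endpoint and examine $q$ as a type II grid-point: a case analysis on the subtype (straight pass versus bent pass) and on which ending path exits toward the left/up regions forces the left-going ender to wander through the unoccupied columns and return; along the way, it must re-enter the column or row already used by the through path, creating a second contact between the ender and the through path, contradicting the single-touch equality.

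The main obstacle is this last case. The Fano-plane combinatorial data is consistent with all local touching counts, so the contradiction must be purely geometric. The delicate step is the bookkeeping at the extremal endpoint $q$: one must keep track of how each ending path bends and returns to its other type II endpoint without entering the interior of the through path (by interior-disjointness) yet still terminating at a type II grid-point inside the allowed region, and show that both options for the type II subtype at $q$ inevitably force a repeated contact. The three cases with $a \geq 1$ are immediate pigeonhole arguments; the Fano case is where essentially all of the work lies.
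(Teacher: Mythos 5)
Your reduction to the case analysis is sound and, up to bookkeeping, parallels the paper: the equality argument forcing $w_u=3$ for every vertex, the conclusion that every endpoint is of type I or II and that paths pairwise touch at most once, and the elimination of all configurations involving type I points (the paper rules out mixed-type paths by a degree count $3+2+2>6$ and then kills $|P_I|>0$ by propagation plus the count $6|P_I|+3|P_{II}|=21$; your pigeonhole refutations of $(a,b)\in\{(3,1),(2,3),(1,5)\}$ are correct and do the same job). However, there is a genuine gap exactly where you say the work lies: the all-type-II case $(a,b)=(0,7)$ is not proved. Your corner-extremal sketch does not close it. The preliminary observations (no endpoint on the boundary of the used region, the extremal corner is a bend) are fine, but the decisive claim --- that the ending path leaving the extremal type II point ``must re-enter the column or row already used by the through path, creating a second contact'' --- is unsubstantiated: sharing a row or column with another path creates no contact, and interior-disjointness by itself does not force the ending path back onto the through path; a local case analysis at a single extremal point does not obviously yield any contradiction, since locally the configuration is realizable. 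This is precisely the case the paper resolves by a global, non-local argument: it builds an auxiliary graph whose vertices are the $7$ type II grid-points, joining two of them when they are consecutive on a common path, observes that the grid drawing gives a planar embedding of this graph, notes it is $4$-regular on $7$ vertices (each point is an endpoint of two paths and interior to a third), and invokes the fact (proved in the appendix) that every $4$-regular graph on $7$ vertices contains a $K_{3,3}$ minor, hence is nonplanar --- a contradiction. Without such a global planarity obstruction (or a completed version of your extremal argument, which at present is only a plan), the proof is incomplete.
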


\begin{proof}
Since the class of CPG graphs is hereditary, it is sufficient to show that $K_7$ is not a CPG graph. Suppose, to the contrary, that $K_7$ is a CPG graph and consider a CPG representation $\mathcal{R} = (\mathcal{G},\mathcal{P})$ of $K_7$. Observe first that the weight of every vertex with respect to $\mathcal{R}$ must be exactly $2 \cdot 3/2$, as otherwise by Observation \ref{wui}, we would have $\sum_{u\in V} w_u < 3|V|=21=|E|$ which contradicts Observation~\ref{lowerbound}. This implies in particular that every grid-point corresponding to an endpoint of a path is either of type I or II. Furthermore, any two paths must touch at most once, for otherwise by Observation~\ref{lowerbound}, $|E| < \sum_{u\in V} w_u = 3|V|=|E|$. Hence, if we denote by $P_I$ (resp. $P_{II}$) the set of grid-points of type I (resp. type II), then since $\tau (p) = 6$ for all $p \in P_I$ and $\tau (p) = 3$ for all $p \in P_{II}$, we have that $6|P_I| + 3|P_{II}| = 21$, which implies $|P_{II}| \neq 0$. Suppose that there exists a path $P_u$ having one endpoint corresponding to a grid-point of type I and the other corresponding to a grid-point of type II. Since the corresponding vertex $u$ has degree 6, $P_u$ must then properly contain an endpoint of another path which, as first observed, necessarily corresponds to a grid-point of type II. But vertex $u$ would then have degree $3+2+2$ as no two paths touch more than once, a contradiction. Hence, every path has both its endpoints of the same type. But then, $|P_I| = 0$; indeed, if there exists a path having both its endpoints of type I, since no two paths touch more than once, this implies that every path has both its endpoints of type I, i.e. $|P_{II}| =0$, a contradiction. Now, if we consider each grid-point of type II as a vertex and connect any two such vertices when the corresponding grid-points belong to a same path, then we obtain a planar embedding of a 4-regular graph on 7 vertices. But this contradicts the fact that every 4-regular graph on 7 vertices contains $K_{3,3}$ as a minor (a proof of this result can be found in Section \ref{4-regular} of the Appendix). \qed
\end{proof}

However, CPG graphs may contain cliques on 6 vertices as shown in Proposition~\ref{K6}. Due to lack of space, its proof is omitted here and can be found in Section \ref{app:k6} of the Appendix.

\begin{proposition}
\label{K6}
$K_6$ is in $B_2$-CPG $\backslash B_1$-CPG.
\end{proposition}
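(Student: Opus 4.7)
The statement has two independent parts, so the plan splits accordingly.

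For the first part, namely that $K_6$ admits a $2$-bend CPG representation, I would simply exhibit a picture. The counting constraints are very tight: $K_6$ has $15$ edges, $3|V|=18$, and at a grid-point with $j$ paths touching we get $\binom{j}{2}$ edges, so $15$ decomposes naturally as one type~I contact ($\binom{4}{2}=6$) together with three type~II contacts ($3\cdot\binom{3}{2}=9$). The construction I would draw places four paths sharing an endpoint at one type~I grid-point $p$, and then uses the two remaining paths, together with three of the first four, to form three type~II contacts along the arms emanating from $p$. Each of the six paths can then be drawn with at most two bends (one bend to leave $p$ and then at most one more bend to pick up the type~II contact). The verification is a finite check: one reads off from the picture that each of the $15$ pairs touches exactly once and no path has more than two bends.

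For the second part, that $K_6\notin B_1\text{-CPG}$, I would argue by contradiction. Assume $\mathcal{R}$ is a $1$-bend CPG representation of $K_6$. Observations~\ref{wui} and~\ref{lowerbound} give $15=|E|\le\sum_u w_u\le 18$, so the total slack is at most $3$, and every vertex must have weight close to $3$. I would reuse the extremality idea from Proposition~\ref{deg5}: the leftmost, then topmost, path-endpoint $p=q_x^1$ has its left grid-edge unused, which forces $w_x^1\le 1$ and hence $w_x\le 5/2$. Symmetrically, the rightmost-then-topmost, the topmost-then-leftmost, and the bottommost-then-leftmost endpoints each yield an unused outer grid-edge and thus each reduce some vertex's weight by at least $1/2$.

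If these four extremal endpoints happened to lie on four distinct paths we would get $\sum_u w_u\le 4\cdot(5/2)+2\cdot 3=16$, which is still compatible with $|E|=15$, so the counting alone does not close the argument. The real work will be in the geometric case analysis: I would consolidate cases according to which of the four extremal endpoints share a path, combining this with the fact that in a $1$-bend representation every path is horizontal, vertical, or a single L. In each case the very small slack (at most $2.5$) forces almost every path-endpoint to be a type~I or type~II grid-point, and forces paths to touch pairwise at most once; combined with the restricted shapes this should leave only finitely many local configurations to examine at the extremal endpoints.

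The main obstacle, in my view, is exactly this geometric case analysis: the weight inequality by itself leaves a slack of roughly one edge, so the contradiction has to come from the rigid interaction between the $1$-bend restriction and the requirement that every pair among the six paths realizes a distinct contact. I expect the argument to finish by showing that any $P_x$ carrying an extremal endpoint cannot simultaneously achieve degree $5$ in $K_6$: either the left grid-edge deficit at $p$ cascades into a missing neighbor of $x$, or one of the extremal endpoints is forced to coincide with another in a way that creates an impossible routing for some other path of the representation.
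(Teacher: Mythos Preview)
Your plan for the first part is fine: the paper likewise just exhibits an explicit $2$-bend representation.

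For the second part, however, your approach diverges from the paper's and carries a genuine gap. The paper does \emph{not} argue via extremal endpoints. Instead it proves two structural claims: (i) a $1$-bend CPG representation of $K_6$ contains no grid-point of type~I, and (ii) it contains no grid-point of type~II. Both are established by elementary but careful geometric case analysis, using repeatedly that a $1$-bend path is an L-shape, so that certain pairs (or triples) of paths forced to leave a grid-point in prescribed directions cannot all mutually touch. Once (i) and (ii) hold, Observation~\ref{wui} gives $w_u\le 2$ for every vertex, whence $15=|E|\le\sum_u w_u\le 12$, an immediate contradiction. The point is that ruling out type~I/II points drops the weight bound from $3$ to $2$ per vertex, which is decisive; your extremal-endpoint idea shaves at most $1/2$ off a handful of vertices and, as you yourself observe, leaves slack.

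The gap in your proposal is precisely where you write ``I expect the argument to finish by\ldots'': you have not identified the mechanism that actually produces the contradiction. Your hope that a path carrying an extremal endpoint cannot reach degree~$5$ is not obviously true (its other endpoint could be type~I or~II, and it could receive further contacts in its interior), and your fallback plan of a case analysis at the extremal corners is not spelled out. The paper's route shows that the right object to attack is the existence of high-multiplicity contact points themselves, not the boundary of the drawing; once you try to carry out your case analysis you will likely find yourself re-deriving exactly the two claims above.
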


We conclude this section with a complexity result pointing towards the fact that there may not be a polynomial characterization of $B_0$-CPG graphs. Let us first introduce rectilinear planar graphs: a graph $G$ is \textit{rectilinear planar} if it admits a rectilinear planar drawing, i.e. a drawing mapping each edge to a horizontal or vertical segment. 

\begin{theorem}
\label{thm:rec}
{\sc Recognition} is $\mathsf{NP}$-complete for $B_0$-CPG graphs.
\end{theorem}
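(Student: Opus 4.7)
The plan is to establish membership in $\mathsf{NP}$ and then prove $\mathsf{NP}$-hardness. Membership is routine: any $B_0$-CPG representation of an $n$-vertex graph can be normalised so that its grid has size polynomial in $n$, since only the relative ordering of segment endpoints along each row and column is combinatorially relevant. Such a representation therefore serves as a polynomial-size certificate whose validity can be checked in polynomial time.

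For hardness, I would reduce from the recognition of rectilinear planar graphs, the problem introduced immediately before the theorem statement and known to be $\mathsf{NP}$-complete even on planar graphs of maximum degree $4$. Given an input $G$, the plan is to construct a graph $G'$ by replacing every vertex $v$ of $G$ with a small \emph{vertex gadget} $H_v$ (a copy of $K_4$ together with one designated attachment vertex per edge of $G$ incident to $v$) and every edge $uv$ of $G$ with a short \emph{edge gadget}---a subdivided path, or a slightly richer subgraph if needed---joining the corresponding attachment vertices of $H_u$ and $H_v$. The crucial structural fact underlying the reduction is that $K_4$ admits, up to rotation and reflection, an essentially unique $B_0$-CPG representation, namely a cross of four axis-aligned arms meeting at a common type~I grid-point. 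This uniqueness can be proved by the same weight- and $\tau$-counting methodology as in the proof of Theorem~\ref{K7}, by systematically ruling out every alternative combination of grid-point types capable of realising the six edges of $K_4$.

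Granted uniqueness of the gadget realisation, every $B_0$-CPG representation of $G'$ must place each $H_v$ as a cross centred at some grid-point $p_v$, and each edge gadget then forces the two cross-arm tips it joins to lie on a common grid line with collinear extensions into their respective crosses. The drawing of $G$ that maps $v$ to $p_v$ and $uv$ to the horizontal or vertical segment from $p_u$ to $p_v$ is therefore a rectilinear planar drawing of $G$. The converse direction is easy: from a rectilinear planar drawing of $G$, place a small cross at each vertex image and lay the edge gadgets along the drawn edges to produce a valid $B_0$-CPG representation of $G'$.

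The main obstacle I foresee is the design of the edge gadget and the associated case analysis. One must ensure that the gadget genuinely forces collinearity of the two attached cross arms, and that the routing freedom available to a $B_0$-CPG representation cannot allow two distant gadgets, or two edge gadgets sharing a cross, to interfere in a way that escapes the above correspondence. Exploiting the knock-knee restriction illustrated in Fig.~\ref{noknockknee}, together with the weight-counting tools from Section~\ref{sec:structure}, should provide the rigidity needed to rule out such pathological configurations, but carrying out the case analysis cleanly is where the bulk of the technical work will lie.
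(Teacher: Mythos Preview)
Your reduction is from the right source problem, but the construction and the paper's diverge sharply, and your version has a real gap at exactly the point you flag as ``the main obstacle''. The paper does not use vertex or edge gadgets at all: it proves directly that $G$ is rectilinear planar if and only if its line graph $L(G)$ is $B_0$-CPG. The forward direction is immediate, since in a rectilinear drawing each edge is already a horizontal or vertical segment and two edges touch precisely when they share an endvertex; the backward direction uses that line graphs are $K_{1,3}$-free, so every path in a $0$-bend representation of $L(G)$ has at most two contact points, and after shortening one obtains a rectilinear drawing of $G$. No rigidity analysis and no gadget case analysis are needed.

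By contrast, your plan hinges on an edge gadget that ``forces collinearity of the two attached cross arms'', and the candidates you mention do not do this. Even a single subdivision fails: if the intermediate vertex is realised by a horizontal segment on some row $r$, it may touch the tip of a \emph{vertical} arm of each cross, so the centres $p_u$ and $p_v$ can sit on arbitrary rows while the gadget lives entirely on row $r$. With a longer subdivided path every subdivision vertex permits a $90^{\circ}$ turn, so what you recover from a $B_0$-CPG representation of $G'$ is not a rectilinear (zero-bend) drawing of $G$ but an orthogonal drawing with bends---and every planar graph of maximum degree $4$ has such a drawing, so the ``only if'' direction of your reduction collapses. You would need a gadget that forces zero bends along each edge, and nothing in your toolkit (the weight/$\tau$-counting, the knock-knee restriction, or the rigidity of $K_4$) supplies this; it is not clear such a $B_0$-CPG gadget exists. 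The line-graph argument sidesteps the issue entirely, because there each edge of $G$ \emph{is} a single segment by construction.
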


\begin{proof}
We show that a graph $G$ is rectilinear planar if and only if its line graph $L(G)$ is $B_0$-CPG. As {\sc Recognition} for rectilinear planar graphs was shown to be $\mathsf{NP}$-complete in \cite{garg}, this concludes the proof. Suppose $G$ is a rectilinear planar graph and let $\mathcal{D}$ be the collection of horizontal and vertical segments in a rectilinear planar drawing of $G$. It is not difficult to see that the contact graph of $\mathcal{D}$ is isomorphic to $L(G)$. Conversely, assume that $L(G)$ is a $B_0$-CPG graph and consider a 0-bend CPG representation $\mathcal{R} = (\mathcal{G},\mathcal{P})$ of $L(G)$. Since $L(G)$ is $K_{1,3}$-free \cite{beineke}, every path in $\mathcal{P}$ has at most two contact points. Thus, by eventually shortening paths, we may assume that contacts only happen at endpoints of paths. Therefore, $\mathcal{R}$ induces a rectilinear planar drawing of $G$, where each vertex corresponds to a contact point in $\mathcal{R}$ and each edge is mapped to its corresponding path in $\mathcal{P}$. \qed
\end{proof}


\section{Planar CPG Graphs}
\label{sec:planar}

In this section, we focus on planar graphs and their relation with CPG graphs. In particular, we show that not every planar graph is CPG and not all CPG graphs are planar.\footnote{We can further show that not all CPG graphs are 1-planar as $K_7 - E(K_3)$ is CPG but not 1-planar \cite{korzhik}.}

\begin{lemma}
\label{trianglefree}
If $G$ is a CPG graph for which there exists a CPG representation containing no grid-point of type I or II.a, then $G$ is planar. In particular, if $G$ is a triangle-free CPG graph, then $G$ is planar.
\end{lemma}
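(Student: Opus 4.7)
The plan is to convert the CPG representation $\mathcal{R}=(\mathcal{G},\mathcal{P})$ directly into a planar drawing of $G$, by routing each edge close to the two paths whose contact produces it. First, for every vertex $u$ I would choose a drawing point $v_u$ in the interior of $P_u$ that is not a contact point; this is possible because each $P_u$ contains only finitely many contact grid-points. For every edge $uv$, let $p_{uv}$ denote the grid-point where $P_u$ and $P_v$ touch, and draw $uv$ as a curve that follows $P_u$ from $v_u$ to $p_{uv}$ and then $P_v$ from $p_{uv}$ to $v_v$. To prevent edges that share a common path from overlapping, each $P_u$ is thickened into a very thin tubular strip, with the curves drawn as parallel tracks inside the strip; the strips are arranged to be pairwise interiorly disjoint outside arbitrarily small disks centred at the contact grid-points.

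All remaining potential crossings are then confined to these small disks, and the key step is a local case analysis at each contact grid-point $p$, where the $j$ paths meeting at $p$ induce a clique $K_j$ in $G$ that must be drawn without crossings inside the disk. Since at most four grid-edges are incident to $p$, one has $j\le 4$, and $j=4$ is precisely type I and thus excluded by hypothesis; for $j\le 2$ local planarity is trivial. For $j=3$, once type II.a is ruled out, only two configurations remain: type II.b and the three-endpoint configuration. In type II.b the two arms of the bent interior path cut off one of the four sectors at $p$ that lies entirely inside the bent path, while the three other sectors each separate two distinct paths and can each accommodate one edge of the local triangle. In the three-endpoint case the free fourth grid-direction merges two adjacent sectors into a single region bounded by two distinct paths, again producing three sectors for the three edges. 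The reason type II.a must be forbidden is that there the straight interior path \emph{separates} the two endpoint paths inside the disk, forcing the third edge of the local triangle to cross that interior path.

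Combining local planarity at every contact grid-point with non-overlap within each strip yields a crossing-free drawing of $G$, which proves the first statement. The triangle-free corollary then follows immediately, since if $G$ were triangle-free, no three paths could pairwise touch at a single grid-point (as this would induce a triangle in $G$), so every grid-point of $\mathcal{R}$ would be incident to at most two paths; in particular no grid-point would be of type I or II.a, and the first part applies. I expect the main obstacle to be the bookkeeping needed to make the strip construction globally consistent, in particular to match the parallel tracks along each strip with the correct sector at each of its contact points, but this is routine once the local combinatorics of the contacts has been fixed.
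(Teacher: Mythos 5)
Your argument is correct, but it takes a genuinely different route from the paper. The paper treats the paths as curves in the plane and observes that, once grid-points of type I and II.a are excluded, every point of the plane lies on at most three curves and, whenever a point is interior to one curve and an endpoint of two others, those two lie on the same side of it; planarity is then obtained by citing Proposition~2.1 of Hlin\v{e}n\'y on contact graphs of curve systems. You instead re-prove that fact from scratch in the grid setting: you place a vertex point on each path, route each edge along the two touching paths inside thin disjoint strips, and do a local sector analysis inside a small disk around each contact grid-point, checking that with at most three paths at a point ($j=4$ being exactly type I) and type II.a excluded, the local triangle can always be drawn in the three available sectors (type II.b and the three-endpoint configuration), while type II.a is precisely the configuration where the interior path separates the two endpoint paths. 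Your case analysis is complete, and your reduction of the triangle-free case (three paths meeting at a grid-point pairwise touch, so a triangle-free graph has at most two paths per grid-point and hence no type I or II.a point) matches the paper's. The trade-off is the expected one: the paper's proof is a two-line reduction to a known result, whereas yours is self-contained and makes the role of type II.a transparent, at the cost of the strip/track bookkeeping you flag; that bookkeeping is indeed routine, since all tracks inside a given strip emanate from the single vertex point of that path, so they can be kept non-crossing within the strip and matched to the correct sectors at each disk.
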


\begin{proof}
Let $G=(V,E)$ be a CPG graph for which there exists a CPG representation $\mathcal{R}$ containing no grid-point of type I or II.a. By considering each path of $\mathcal{R}$ as a curve in the plane, it follows that $G$ is a curve contact graph having a representation (namely $\mathcal{R}$) in which any point in the plane belongs to at most three curves. Furthermore, whenever a point in the plane belongs to the interior of a curve $\mathcal{C}$ and corresponds to an endpoint of two other curves, then those two curves lie on the same side of $\mathcal{C}$ (recall that there is no grid-point of type II.a). Hence, it follows from Proposition 2.1 in \cite{Hlineny} that $G$ is planar. 

If $G$ is a triangle-free CPG graph, then no CPG representation of $G$ contains grid-points of type I or II.a. Hence, $G$ is planar. \qed
\end{proof}

\textit{Remark.} Since $K_{3,3}$ is a triangle-free nonplanar graph, it follows from Lemma \ref{trianglefree} that $K_{3,3}$ is not CPG. Therefore, CPG graphs are $K_{3,3}$-free. Observe however that for any $k \geq 0$, $B_k$-CPG is not a subclass of planar graphs as there exist $B_0$-CPG graphs which are not planar (see Fig. \ref{noplanar}).

\begin{figure}[htb]
\centering
\begin{subfigure}[b]{.45\textwidth}
\centering   
\begin{tikzpicture}[scale=.8]
\node[circ,label=left:{\tiny $1$}] at (0,0) (a) {};
\node[circ,label=left:{\tiny $2$}] at (-1,0) (b) {};
\node[circ,label=left:{\tiny $3$}] at (1,0) (c) {};
\node[circ,label=above:{\tiny $4$}] at (-.5,.9) (d) {};
\node[circ,label=above:{\tiny $5$}] at (.5,.9) (e) {};
\node[circ,label=below:{\tiny $6$}] at (-.5,-.9) (f) {};
\node[circ,label=below:{\tiny $7$}] at (.5,-.9) (g) {};

\draw[-] (a) -- (d)
(a) -- (e)
(a) -- (f)
(a) -- (g)
(b) -- (d) 
(b) -- (e)
(b) -- (f)
(c) -- (d) 
(c) -- (e)
(c) -- (g)
(f) -- (g) node[below,midway] {\tiny $e$};
\end{tikzpicture}
\caption{A nonplanar graph $G$.}
\end{subfigure}
\hspace*{1cm}
\begin{subfigure}[b]{.45\textwidth}
\centering
\begin{tikzpicture}[scale=.8]
\draw[thick,<->,>=stealth] (0,0)--(0,2) node[midway,left] {\tiny $P_5$};
\draw[thick,<->,>=stealth] (2,0)--(2,2) node[midway,right] {\tiny $P_4$};
\draw[thick,<->,>=stealth] (0,0)--(2,0) node[near end,above] {\tiny $P_3$};
\draw[thick,<->,>=stealth] (0,1)--(2,1) node[near end,above] {\tiny $P_1$};
\draw[thick,<->,>=stealth] (0,2)--(2,2) node[near end,above] {\tiny $P_2$};
\draw[thick,<->,>=stealth] (1,0)--(1,1) node[midway,left] {\tiny $P_7$};
\draw[thick,<->,>=stealth] (1,1)--(1,2) node[midway,left] {\tiny $P_6$};
\node[invisible] at (0,-.3) {};
\end{tikzpicture}
\caption{A $0$-bend CPG representation of $G$.}
\end{subfigure} 
\caption{A $B_0$-CPG graph containing $K_{3,3}$ as a minor (contract the edge $e$).}
\label{noplanar}
\end{figure}

It immediatly follows from \cite{chaplick12} that all triangle-free planar graphs are $B_1$-CPG; hence, we have the following corollary.

\begin{corollary}
If a graph $G$ is triangle-free, then $G$ is planar if and only if $G$ is $B_1$-CPG.
\end{corollary}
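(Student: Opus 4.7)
The plan is to prove this biconditional by assembling two already-established facts, one for each direction, so essentially no new work is required beyond pointing to the right prior statements.

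For the forward direction, I would assume $G$ is triangle-free and planar, and invoke the result from \cite{chaplick12} cited immediately before the corollary, which asserts that every triangle-free planar graph is $B_1$-CPG. This gives the desired conclusion directly.

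For the reverse direction, I would assume $G$ is triangle-free and $B_1$-CPG. Since every $B_1$-CPG graph is (by definition) a CPG graph, $G$ is a triangle-free CPG graph, and so Lemma~\ref{trianglefree} applies: its second sentence states exactly that triangle-free CPG graphs are planar. Hence $G$ is planar.

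The only subtlety worth flagging is that Lemma~\ref{trianglefree} handles the nontrivial direction (there is no \emph{a priori} reason a $B_1$-CPG graph should be planar, witness Fig.~\ref{noplanar} where triangles are present). The cleanest exposition is therefore a two-line proof that simply chains the two references, so I do not expect any genuine obstacle here; the structural content lies in Lemma~\ref{trianglefree} and in the cited theorem of \cite{chaplick12}, both of which are already available.
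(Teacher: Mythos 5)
Your proposal matches the paper's argument exactly: the forward direction is the cited result of \cite{chaplick12} that triangle-free planar graphs are $B_1$-CPG, and the reverse direction follows from the second part of Lemma~\ref{trianglefree} since every $B_1$-CPG graph is CPG. No gaps; this is the same two-line chaining the paper intends.
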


The next result allows us to detect planar graphs that are not CPG.

\begin{lemma}
\label{maxdeg3}
Let $G = (V,E)$ be a planar graph. If $G$ is a CPG graph, then $G$ has at most $4|V| - 2f + 4$ vertices of degree at most 3, where $f$ denotes the number of faces of $G$. In particular, if $G$ is maximally planar, then $G$ has at most 12 vertices of degree at most 3.
\end{lemma}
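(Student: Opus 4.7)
The plan is to establish the stronger, purely CPG-based inequality $|V_{\leq 3}|\leq 6|V|-2|E|$, which for the connected planar graph $G$ rewrites via Euler's formula as $4|V|-2f+4$, specialising to $12$ in the maximally planar case where $f=2|V|-4$. Fix a CPG representation $\mathcal{R}$ of $G$ and introduce two non-negative quantities: the per-vertex \emph{deficit} $\phi_u := 3-w_u$ (by Observation~\ref{wui}) and the global \emph{touching excess} $X := \sum_u w_u - |E|$ (by Observation~\ref{lowerbound}; $X$ counts the surplus coming from pairs of paths that touch at more than one grid-point). A direct computation gives $\sum_u \phi_u + X = 3|V|-|E|$, so it is enough to prove $|V_{\leq 3}|\leq 2\sum_u \phi_u + 2X$.

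I would split $V_{\leq 3}$ into $V^+ := \{u\in V_{\leq 3} : \phi_u>0\}$ and $V^0 := \{u\in V_{\leq 3} : \phi_u = 0\}$. The remark after Observation~\ref{wui} forces $\phi_u$ to be a non-negative multiple of $\tfrac12$, so $\phi_u > 0$ gives $\phi_u \geq \tfrac12$, and hence $|V^+|\leq 2\sum_u\phi_u$ is immediate. All the difficulty therefore lies in bounding $|V^0|$ via $X$.

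The key step is that every $u\in V^0$ must carry an edge whose two paths touch at (at least) two distinct grid-points. Indeed, $w_u=3$ forces $w_u^1=w_u^2=3/2$, so by Observation~\ref{wui} both endpoints of $P_u$ are of type I or II. Letting $N_i$ be the set of neighbors of $u$ contacting $P_u$ at $q_u^i$, a type I endpoint provides three distinct neighbors and a type II endpoint provides two (one through its interior and one through a shared endpoint), so $|N_1|+|N_2|\geq 4$; combined with $|N_1\cup N_2|\leq d(u)\leq 3$, inclusion--exclusion produces some $v\in N_1\cap N_2$, whence $P_u$ and $P_v$ already touch at both $q_u^1$ and $q_u^2$. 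Letting $M$ denote the set of such ``doubly-touched'' edges, a double-count of $V^0$--$M$ incidences yields $|V^0|\leq 2|M|$; and since every edge in $M$ contributes at least $1$ to $X$, we obtain $|V^0|\leq 2|M|\leq 2X$.

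Assembling the pieces, $|V_{\leq 3}| = |V^+|+|V^0| \leq 2\sum_u\phi_u + 2X = 6|V|-2|E|$, which by Euler equals $4|V|-2f+4$, and the ``in particular'' clause follows from $f=2|V|-4$. I expect the multi-touch step for $V^0$ to be the main obstacle: one has to recognise that a low-degree vertex whose weight is already saturated at $3$ \emph{cannot} have all its contacts occur at distinct grid-points, so the deficit it creates in $\sum_u\phi_u$ is always absorbed by a corresponding surplus in $X$; the remaining steps are routine bookkeeping with Observations~\ref{wui} and~\ref{lowerbound}.
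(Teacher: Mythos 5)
Your proposal is correct and follows essentially the same route as the paper: both establish the intermediate bound $|U|\leq 6|V|-2|E|$ via the weight function and then apply Euler's formula, with the same key structural step that a vertex of degree at most $3$ whose path has both endpoints of type I or II must touch some neighbouring path at two distinct grid-points. Your deficit/excess bookkeeping ($\phi_u$ and $X$, with $|V^0|\leq 2|M|\leq 2X$) is simply a more explicit version of the paper's one-line remark that a repeated contact means ``the corresponding edge is already accounted for.''
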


\begin{proof}
Let $G = (V,E)$ be a planar CPG graph and $\mathcal{R}=(\mathcal{G},\mathcal{P})$ a CPG representation of $G$. Denote by $U$ the subset of vertices in $G$ of degree at most 3. If a path $P_u$, with $u \in U$, touches every other path in $\mathcal{P}$ at most once, then, since at least one endpoint of $P_u$ is then not a grid-point of type I or II, the weight of $u$ with respect to $\mathcal{R}$ is at most $3/2 + 1$. Thus, if we assume that this is the case for all paths whose corresponding vertex is in $U$, we have by Observation \ref{lowerbound}
\[|E| \leq \bigg(\frac{3}{2} + 1\bigg) |U| + 3(|V| - |U|) = 3|V| - \frac{|U|}{2}.\]
On the other hand, if there exists $u \in U$ such that $P_u$ touches some path more than once, then the above inequality still holds as the corresponding edge is already accounted for. Using the fact that $f = |E| - |V| + 2$ (Euler's formula), we obtain the desired upper bound. Moreover, if $G$ is maximally planar, then $f = 2|V| - 4$ and so $|U| \leq 12$. \qed
\end{proof}

\textit{Remark.} In Fig. \ref{notCPG}, we give an example of a maximally planar graph which is not CPG due to Lemma \ref{maxdeg3}. It is constructed by iteratively adding a vertex in a triangular face, starting from the triangle, so that it has exactly 13 vertices of degree 3. There exist however maximally planar graphs which are CPG (see Fig. \ref{pCPG}). Note that maximally planar graphs do not admit a VCPG \cite{felsner1}.

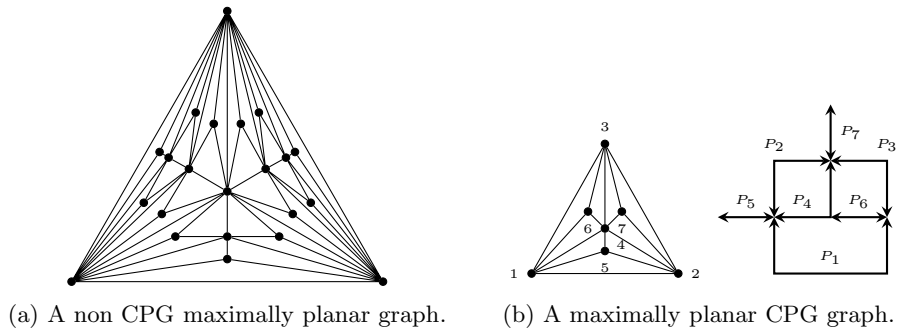
\begin{figure}[htb]
\centering
\begin{subfigure}[b]{.48\textwidth}
\centering
\begin{tikzpicture}[scale=0.3]
\node[circ] (a) at (0,0) {};
\node[circ] (d) at (0,-2) {};
\node[circ] (g) at (0,-3) {};
\node[circ] (b) at (1.7,1) {};
\node[circ] (c) at (-1.7,1) {};
\node[circ] (e) at (2.6,1.5) {};
\node[circ] (f) at (-2.6,1.5) {};
\node[circ] (h) at (3,1.75) {};
\node[circ] (i) at (-3,1.75) {};
\node[circ] (j) at (6.9,-4) {};
\node[circ] (l) at (-6.9,-4) {};
\node[circ] (k) at (0,8) {};
\node[circ] (1) at (2.3,-2) {};
\node[circ] (2) at (-2.3,-2) {};
\node[circ] (3) at (-2.9,-1) {};
\node[circ] (4) at (-0.6,3) {};
\node[circ] (5) at (0.6,3) {};
\node[circ] (6) at (2.9,-1) {};
\node[circ] (7) at (-3.7,-0.5) {};
\node[circ] (8) at (-1.4,3.5) {};
\node[circ] (9) at (1.4,3.5) {};
\node[circ] (10) at (3.7,-0.5) {};

\draw[-] (a) -- (b)
(a) -- (c)
(a) -- (d)
(a) -- (j) 
(a) -- (k)
(a) -- (l)
(a) -- (1)
(a) -- (2) 
(a) -- (3)
(a) -- (4)
(a) -- (5)
(a) -- (6)
(b) -- (e)
(b) -- (j)
(b) -- (k)
(b) -- (5)
(b) -- (6)
(b) -- (9)
(b) -- (10)
(c) -- (f)
(c) -- (k)
(c) -- (l)
(c) -- (3)
(c) -- (4)
(c) -- (7) 
(c) -- (8)
(d) -- (g)
(d) -- (j)
(d) -- (l)
(d) -- (1) 
(d) -- (2)
(e) -- (h) 
(e) -- (j)
(e) -- (k)
(e) -- (9)
(e) -- (10)
(f) -- (i)
(f) -- (k)
(f) -- (l)
(f) -- (7) 
(f) -- (8)
(g) -- (j)
(g) -- (l)
(h) -- (j)
(h) -- (k)
(i) -- (k)
(i) -- (l)
(j) -- (k)
(j) -- (l)
(j) -- (1)
(j) -- (6)
(j) -- (10)
(k) -- (l)
(k) -- (4)
(k) -- (5)
(k) -- (8)
(k) -- (9)
(l) -- (2)
(l) -- (3)
(l) -- (7);

\end{tikzpicture}
\caption{A non CPG maximally planar graph.}
\label{notCPG}
\end{subfigure}
\hspace*{.5cm}
\begin{subfigure}[b]{.43\textwidth}
\centering
\begin{tikzpicture}[scale=.75]
\node[circ,label=left:{\tiny 1}] (a) at (-1.3,0) {};
\node[circ,label=right:{\tiny 2}] (b) at (1.3,0) {};
\node[circ,label=above:{\tiny 3}] (c) at (0,2.3) {};
\node[circ,label=below right:{\tiny 4}] (d) at (0,.8) {};
\node[circ,label=below:{\tiny 5}] (e) at (0,.4) {};
\node[circ,label=below:{\tiny 6}] (f) at (-.3,1.1) {};
\node[circ,label=below:{\tiny 7}] (g) at (.3,1.1) {};

\draw[-] (a) -- (b)
(a)-- (c)
(a) -- (d)
(a) -- (e)
(a) -- (f)
(b) -- (c) 
(b) -- (d)
(b) -- (e) 
(b) -- (g)
(c) -- (d)
(c) -- (f)
(c) -- (g)
(d) -- (e)
(d) -- (f)
(d) -- (g);

\draw[<->,thick,>=stealth] (3,1) -- (3,0) -- node[midway,above] {\tiny $P_1$} (5,0) -- (5,1);
\draw[<->,thick,>=stealth] (3,1) -- (3,2) node[above] {\tiny $P_2$} -- (4,2);
\draw[<->,thick,>=stealth] (3,1) -- node[midway,above] {\tiny $P_4$} (4,1) -- (4,2);
\draw[<->,thick,>=stealth] (4,2) -- (5,2) node[above] {\tiny $P_3$} -- (5,1);
\draw[<->,thick,>=stealth] (4,1) -- node[midway,above] {\tiny $P_6$} (5,1);
\draw[<->,thick,>=stealth] (3,1) -- node[midway,above] {\tiny $P_5$} (2,1);
\draw[<->,thick,>=stealth] (4,2) -- node[midway,right] {\tiny $P_7$} (4,3);
\end{tikzpicture}
\caption{A maximally planar CPG graph.}
\label{pCPG}
\end{subfigure}
\caption{Two maximally planar graphs.}
\end{figure}


\section{Coloring CPG Graphs}
\label{sec:coloring}

In this section, we provide tight upper bounds on the chromatic number of $B_k$-CPG graphs for different values of $k$ and investigate the {\sc 3-Colorability} problem for CPG graphs. The proof of the following result is an easy exercise left to the reader (see Section \ref{app:6color} of the Appendix).

\begin{theorem}
\label{6color}
CPG graphs are 6-colorable.
\end{theorem}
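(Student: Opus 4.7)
The plan is to proceed by induction on $|V|$, combining the structural dichotomy of Lemma \ref{degree} with the $K_7$-freeness of Theorem \ref{K7}. The base cases with $|V|\le 6$ are immediate.

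For the inductive step, Lemma \ref{degree} offers two possibilities. If $G$ has a vertex $v$ of degree at most $5$, then $G-v$ is CPG (hereditariness of the class) and, by the induction hypothesis, admits a $6$-coloring. Since $v$ has at most $5$ neighbors, at least one of the six colors remains free, and the coloring extends to $v$. This part is the standard degeneracy-style argument and requires nothing beyond Lemma \ref{degree}.

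The delicate case is when $G$ is $6$-regular, because there a naive vertex-deletion argument leaves $v$ with six possibly differently colored neighbors. The plan is to handle each connected component separately and invoke Brooks' theorem, which states that every connected graph with maximum degree $\Delta$ is $\Delta$-colorable unless it is $K_{\Delta+1}$ or an odd cycle. Each component of $G$ is itself $6$-regular and CPG, hence has $\Delta=6$ (ruling out odd cycles, where $\Delta=2$) and cannot be $K_7$ by Theorem \ref{K7}. Brooks' theorem therefore yields a $6$-coloring of each component, and these combine into a $6$-coloring of $G$.

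The main obstacle is precisely the $6$-regular case, and that obstacle is cleanly removed by the prior exclusion of $K_7$ as a CPG graph: without Theorem \ref{K7} one would only be able to conclude $7$-colorability from degeneracy, and Brooks would not close the gap. With the $K_7$-freeness in hand, the entire argument reduces to one induction step plus one invocation of Brooks' theorem, which is why the result is marked as an easy exercise.
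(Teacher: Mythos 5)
Your proof is correct and follows essentially the same route as the paper's: the dichotomy of Lemma \ref{degree}, Brooks' theorem combined with $K_7$-freeness (Theorem \ref{K7}) in the $6$-regular case, and induction via a vertex of degree at most $5$ otherwise. The only (welcome) refinement is your explicit component-by-component application of Brooks' theorem, which the paper leaves implicit.
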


\textit{Remark.} Since $K_6$ is $B_2$-CPG, this bound is tight for $B_k$-CPG graphs with $k \geq 2$. We leave as an open problem whether this bound is also tight for $B_1$-CPG graphs (note that it is at least 5 since $K_5$ is $B_1$-CPG).

\begin{theorem}
$B_0$-CPG graphs are 4-colorable. Moreover, $K_4$ is a 4-chromatic $B_0$-CPG graph.
\end{theorem}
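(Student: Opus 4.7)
The plan is to exploit the natural partition, in any $0$-bend CPG representation, of the paths into horizontal (H) and vertical (V) ones, and to show that the H-H contact subgraph and the V-V contact subgraph are each bipartite. Using disjoint $2$-element palettes $\{1,2\}$ for H-paths and $\{3,4\}$ for V-paths then produces a proper $4$-coloring: any H-V edge is automatically properly colored because the palettes are disjoint, while H-H and V-V edges are handled by the respective $2$-colorings.

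The main work goes into the bipartiteness of the H-H subgraph. First I would note that two H-paths cannot touch unless they lie on the same row, since a horizontal path meets the grid only in points of a single row. Next, on a common row, I would argue that two H-paths $P$ and $P'$ can only touch at a point that is an endpoint of \emph{both}: if the contact point were an endpoint of $P$ but an interior point of $P'$, then $P$ would extend horizontally on that same row away from the contact, forcing a whole segment of interior overlap with $P'$ and contradicting interior-disjointness. The same argument shows that at most two H-paths can meet at a common grid-point on a row (one extending to the left, one to the right), since two paths extending in the same direction would overlap in their interiors. Hence each H-path has at most two H-neighbors, one at each endpoint, and the restriction of the H-H subgraph to any single row is a collection of intervals glued end-to-end along a line, i.e.\ a disjoint union of path graphs. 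Summing over all rows, the H-H subgraph is bipartite; by the symmetric argument applied to columns, so is the V-V subgraph.

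For tightness, a $0$-bend representation of $K_4$ is obtained by placing four straight paths meeting at a single grid-point with that point as their common endpoint (one extending in each of the four grid-directions, as in the type~I configuration of Fig.~\ref{intpointI}). All six adjacencies of $K_4$ are realized, so $K_4$ is a $B_0$-CPG graph; since $\chi(K_4)=4$, this witnesses that the upper bound of $4$ cannot be improved.

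The step I expect to be the most delicate is not really an obstacle but a subtlety: the interior-disjointness argument that rules out endpoint-to-interior H-H contacts and limits the multiplicity at any shared endpoint to two. This is precisely what forces the H-H contact graph to be a linear forest rather than something potentially containing odd cycles, and it is what makes a clean $2+2$ color split possible.
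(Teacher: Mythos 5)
Your proof is correct and follows essentially the same route as the paper's: split the paths into horizontal and vertical ones, observe that within each row (resp.\ column) the contacts form a disjoint union of path graphs, and color with two disjoint pairs of colors; your explicit justification of the linear-forest structure and the type-I realization of $K_4$ only flesh out details the paper leaves implicit.
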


\begin{proof}
Let $G$ be a $B_0$-CPG graph and $\mathcal{R}=(\mathcal{G},\mathcal{P})$ a $0$-bend CPG representation of $G$. Denote by $\mathcal{L}$ (resp. $\mathcal{C}$) the set of rows (resp. columns) of $\mathcal{G}$ on which lies at least one path of $\mathcal{P}$. Since the representation contains no bend, if $A$ is a row in $\mathcal{L}$ (resp. column in $\mathcal{C}$), then the set of vertices having their corresponding path on $A$ induces a collection of disjoint paths in $G$. If $B \neq A$ is another row in $\mathcal{L}$ (resp. column in $\mathcal{C}$), then no path in $A$ touches a path in $B$. Hence, it suffices to use two colors to color the vertices having their corresponding path in a row of $\mathcal{L}$ and two other colors to color the vertices having their corresponding path in a column of $\mathcal{C}$ to obtain a proper coloring of $G$. \qed
\end{proof}

It immediately follows from a result in \cite{Hlineny1bis} that the {\sc 3-colorability} problem is $\mathsf{NP}$-complete in CPG, even if the graph admits a representation in which each grid-point belongs to at most two paths. We conclude this section by a strenghtening of this result.

\begin{theorem}
\label{thm:3c}
{\sc 3-Colorability} is $\mathsf{NP}$-complete in $B_0$-CPG.
\end{theorem}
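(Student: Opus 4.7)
\smallskip
\textit{Proof proposal.}
The plan is to polynomially reduce from a restricted \textsc{3-Colorability} problem that is already NP-complete and whose instances embed naturally on the grid. A convenient source is \textsc{3-Colorability} restricted to planar graphs of maximum degree $4$, which is classically NP-complete. Given such an input graph $G$, I would first fix an orthogonal planar drawing $\mathcal{D}$ of $G$ on an integer grid: every vertex $v$ is placed at a grid-point, every edge $uv$ is drawn as a polyline of horizontal and vertical grid-segments meeting at right-angle bends, and distinct edges share no interior points. Such a drawing can be computed in polynomial time and (after a uniform scaling) we may assume that consecutive bends along an edge, and all vertices, are far apart on the grid, leaving room to place gadgets.

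Next, I would turn $\mathcal{D}$ into a $B_0$-CPG representation $\mathcal{R}=(\mathcal{G},\mathcal{P})$ of an auxiliary graph $H$ by replacing each geometric element of $\mathcal{D}$ by a small straight-segment gadget: (i) a \emph{vertex gadget} at every vertex $v$ of $G$, realized as three straight paths meeting at a common grid-point so that they form a $K_3$ in $H$, reserving one distinguished path $P_v$ as the ``color port'' of $v$; (ii) an \emph{edge gadget} along every edge $uv$ of $\mathcal{D}$, built as a chain of horizontal and vertical straight paths linked by single-point contacts, together with auxiliary $K_3$-triangles placed at each bend and at regular intervals along the chain. The role of the chain is to propagate a color constraint between $P_u$ and $P_v$, and the role of each triangular ``flipper'' is to compensate for the forced color change between successive touching paths so that, in every proper $3$-coloring of $H$, the ports $P_u$ and $P_v$ receive \emph{distinct} colors if and only if $uv$ is an edge of $G$. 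I would design the flipper so that the parity of the chain can be adjusted by inserting or removing one more triangle, thereby dealing uniformly with odd/even numbers of bends on an edge.

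The soundness direction is then: given a proper $3$-coloring $c$ of $G$, color every port $P_v$ with $c(v)$ and extend greedily along each edge gadget, using the freedom provided by the flippers to settle on a valid coloring of the chain. The completeness direction is: given a proper $3$-coloring of $H$, the induced coloring of the ports is a proper $3$-coloring of $G$, because each edge gadget was constructed to enforce exactly that constraint on its endpoints. Since the orthogonal drawing, and hence the construction, has polynomial size, and since $H$ is manifestly $B_0$-CPG by construction, this yields a polynomial reduction. Membership in $\mathsf{NP}$ is immediate: a $3$-coloring is a polynomially verifiable certificate.

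The main obstacle will be the \emph{edge gadget}: it must simultaneously (a) be realizable with straight (zero-bend) segments that are interiorly disjoint from all other paths in $\mathcal{R}$, (b) precisely enforce the color-distinctness between its two ports regardless of the number of bends of the underlying polyline in $\mathcal{D}$, and (c) cohabit with the neighboring vertex and edge gadgets without accidental extra contacts. I expect that a careful choice of the flipper — most likely a copy of $K_3$ glued to the chain at one path — suffices, since $K_3$ is $3$-colorable and at a common grid-point of three touching straight paths the three colors are forced to be all distinct, giving fine control over which color leaves the flipper. If this combinatorial tuning proves delicate, a fallback is to start from \textsc{3-Colorability} on planar cubic graphs and to exploit the degree restriction to simplify the vertex gadget to a single path, reducing the verification of constraint (c) to a purely local check.
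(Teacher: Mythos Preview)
Your overall strategy matches the paper's: reduce from \textsc{3-Colorability} on planar graphs of maximum degree~$4$, fix an orthogonal grid drawing, and replace edges by $B_0$-CPG gadgets that enforce the right color constraint between the two endpoint ``ports''. The paper even uses a single vertical path per vertex rather than your three-path $K_3$ vertex gadget, so that part of your plan is more complicated than necessary but not wrong.

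The genuine gap is the edge gadget. You reason in terms of ``forced color change between successive touching paths'' and ``parity of the chain'', proposing $K_3$ flippers to adjust this parity. That intuition is a $2$-coloring intuition and does not transfer to $3$-coloring: along a path $P_1\!-\!P_2\!-\!\cdots\!-\!P_k$ with $k\ge 3$, a proper $3$-coloring imposes no relation whatsoever between $c(P_1)$ and $c(P_k)$, regardless of parity; and attaching a $K_3$ to a chain vertex adds no constraint between the two ends either. What is actually needed is an \emph{equality gadget}: a $B_0$-CPG subgraph with two designated non-adjacent terminals that are forced to receive the \emph{same} color in every $3$-coloring. The paper builds exactly this (the gadget $H$, on thirteen vertices, forces $c(a)=c(b)$ and $c(b)\neq c(c)$, with a realization by straight horizontal/vertical segments), and then chains copies of it along the interior vertical segments of each drawn edge, capping with a single inequality gadget $H'$ so that the net effect between the two vertex-paths is ``different colors''. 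Your proposal does not contain such an equality gadget, and the $K_3$ flipper you suggest cannot play that role; this is the missing idea, and it is also where the geometric work lies, since the gadget must be realizable with zero-bend paths sitting between two prescribed vertical segments.

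A secondary issue: your fallback to planar \emph{cubic} graphs is not usable. By Brooks' theorem every connected cubic graph other than $K_4$ is $3$-colorable, so \textsc{3-Colorability} on planar cubic graphs is in $\mathsf{P}$ and cannot serve as the source of an NP-hardness reduction.
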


\begin{proof}
We exhibit a polynomial reduction from {\sc 3-Colorability} restricted to planar graphs of maximum degree 4, which was shown to be $\mathsf{NP}$-complete in~\cite{garey}.

Let $G=(V,E)$ be a planar graph of maximum degree 4. It follows from \cite{tamassia} that $G$ admits a grid embedding where each vertex is mapped to a grid-point and each edge is mapped to a grid-path with at most 4 bends, in such a way that all paths are interiorly disjoint (such an embedding can be obtained in linear time). Denote by $\mathcal{D} = (\mathcal{V}, \mathcal{E})$ such an embedding, where $\mathcal{V}$ is the set of grid-points in one-to-one correspondence with $V$ and $\mathcal{E}$ is the set of grid-paths in one-to-one correspondence with $E$. For any vertex $u \in V$, we denote by $(x_u,y_u)$ the grid-point in $\mathcal{V}$ corresponding to $u$ and by $P_u^N$ (resp. $P_u^S$) the path of $\mathcal{E}$, if any, having $(x_u,y_u)$ as an endpoint and using the grid-edge above (resp. below) $(x_u,y_u)$. For any edge $e \in E$, we denote by $P_e$ the path in $\mathcal{E}$ corresponding to $e$. We construct from $\mathcal{D}$ a 0-bend CPG representation $\mathcal{R}$ in such a way that the corresponding graph $G'$ is 3-colorable if and only if $G$ is 3-colorable.

By eventually adding rows and columns to the grid, we may assume that the interior of each path $P$ in $\mathcal{E}$ is surrounded by an empty region, i.e. no path $P' \neq P$ or grid-point of $\mathcal{V}$ lies in the interior of this region. In the following, we denote this region by $\mathcal{R}_P$ (delimited by red dashed lines in every subsequent figure) and assume, without loss of generality, that it is always large enough for the following operations. 

We first associate with every vertex $u \in V$ a \textit{vertical path} $P_u$ containing the grid-point $(x_u,y_u)$ as follows. If $P_u^N$ (resp. $P_u^S$) is not defined, the top (resp. lower) endpoint of $P_u$ is $(x_u,y_u + \varepsilon)$ (resp. $(x_u,y_u - \varepsilon)$) for a small enough $\varepsilon$ so that the segment $[(x_u,y_u),(x_u,y_u + \varepsilon)]$ (resp. $[(x_u,y_u),(x_u,y_u - \varepsilon)]$) touches no path of $\mathcal{E}$. If $P_u^N$ has at least one bend, then the top endpoint of $P_u$ lies at the border of $\mathcal{R}_{P_u^N}$ on column $x_u$ (see Fig. \ref{Pu1}). If $P_u^N$ has no bend, then the top endpoint of $P_u$ lies at the middle of $P_u^N$ (see Fig. \ref{Pu2}). Similarly, we define the lower endpoint of $P_u$ according to $P_u^S$: if  $P_u^S$ has at least one bend, then the lower endpoint of $P_u$ lies at the border of $\mathcal{R}_{P_u^S}$ on column $x_u$, otherwise it lies at the middle of $P_u^S$.

\begin{figure}[htb]
\centering
\begin{subfigure}[b]{.45\textwidth}
\centering
\begin{tikzpicture}
\node[circ,label={[label distance=.001cm]3:\scriptsize $(x_u,y_u)$}] (u) at (0,0) {};
\draw[-,thick] (u) -- (0,2) -- (1,2);
\draw[thick,dotted] (1,2) -- (1.5,2) node[right] {\scriptsize $P_u^N$};
\draw[-,thick] (u) -- (1,0);
\draw[thick,dotted] (1,0) -- (1.5,0);
\draw[-,thick] (u) -- (-1,0);
\draw[thick,dotted] (-1.5,0) -- (-1,0);
\draw[-,thick] (0,-1) -- (u);
\draw[thick,dotted] (0,-1.5) -- (0,-1);
\draw[thick,dashed,red] (-1.5,.4) -- (-.4,.4) -- (-.4,2.4) -- (1.5,2.4);
\draw[thick,dashed,red] (-1.5,-.4) -- (-.4,-.4) -- (-.4,-1.5);
\draw[thick,dashed,red] (.4,-1.5) -- (.4,-.4) -- (1.5,-.4);
\draw[thick,dashed,red] (1.5,.4) -- (.4,.4) -- (.4,1.6) -- (1.5,1.6);
\draw[->,thick,>=stealth,blue] (u) -- (0,2.4); 
\end{tikzpicture}
\caption{$P_u^N$ contains at least one bend.}
\label{Pu1}
\end{subfigure}
\hspace*{.5cm}
\begin{subfigure}[b]{.45\textwidth}
\centering
\begin{tikzpicture}
\node[circ,label={[label distance=.001cm]3:\scriptsize $(x_u,y_u)$}] (u) at (0,0) {};
\node[circ] (v) at (0,2) {};
\draw[-,thick] (u) -- (v) node[above right] {\scriptsize $P_u^N$};
\draw[-,thick] (u) -- (1,0);
\draw[thick,dotted] (1,0) -- (1.5,0);
\draw[-,thick] (u) -- (-1,0);
\draw[thick,dotted] (-1.5,0) -- (-1,0);
\draw[-,thick] (0,-1) -- (u);
\draw[thick,dotted] (0,-1.5) -- (0,-1);
\draw[thick,dashed,red] (-1.5,.4) -- (-.4,.4) -- (-.4,2);
\draw[thick,dashed,red] (1.5, .4) -- (.4,.4) -- (.4,2);
\draw[thick,dashed,red] (-1.5,-.4) -- (-.4,-.4) -- (-.4,-1.5);
\draw[thick,dashed,red] (.4,-1.5) -- (.4,-.4) -- (1.5,-.4);
\draw[->,thick,>=stealth,blue] (u) -- (0,1);
\end{tikzpicture}
\caption{$P_u^N$ contains no bend.}
\label{Pu2}
\end{subfigure}
\caption{Constructing the path $P_u$ corresponding to vertex $u$ (in blue).}
\label{pathcons}
\end{figure}
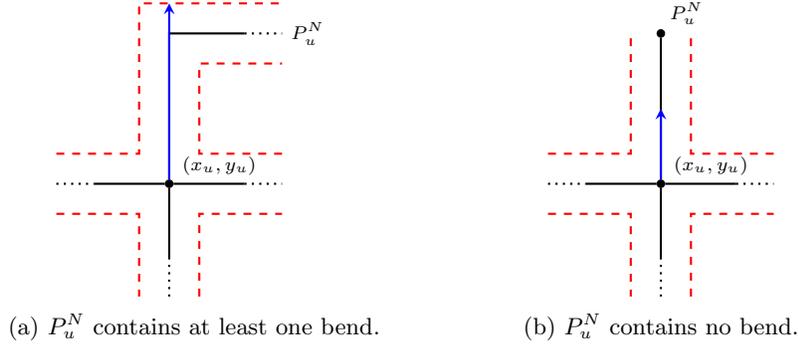

For any path $P$ of $\mathcal{E}$ with at least two bends, an \textit{interior vertical segment of $P$} is a vertical segment of $P$ containing none of its endpoints (note that since every path in $\mathcal{E}$ has at most 4 bends, it may contain at most two interior vertical segments). We next replace every interior segment of $P$ by a slightly longer vertical path touching the border of $\mathcal{R}_P$ (see Fig. \ref{intseg}). 

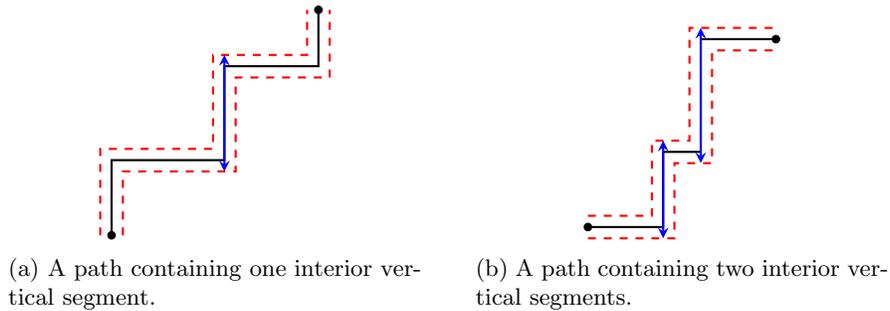
\begin{figure}
\centering
\begin{subfigure}[b]{.45\textwidth}
\centering
\begin{tikzpicture}[scale=.5]
\node[circ] at (0,0) {};
\node[circ] at (5.5,6) {};
\draw[-,thick] (0,0) -- (0,2) -- (3,2) -- (3,4.5) -- (5.5,4.5) -- (5.5,6);
\draw[thick,dashed,red] (-.3,0) -- (-.3,2.3) -- (2.7,2.3) -- (2.7,4.8) -- (5.2,4.8) -- (5.2,6);
\draw[thick,dashed,red] (.3,0) -- (.3,1.7) -- (3.3,1.7) -- (3.3,4.2) -- (5.8,4.2) -- (5.8,6);
\draw[<->,thick,>=stealth,blue] (3,1.7) -- (3,4.8);
\end{tikzpicture}
\caption{A path containing one interior vertical segment.}
\end{subfigure}
\hspace*{.5cm}
\begin{subfigure}[b]{.45\textwidth}
\centering
\begin{tikzpicture}[scale=.5]
\node[circ] at (0,0) {};
\node[circ] at (5,5) {};
\draw[-,thick] (0,0) -- (2,0) -- (2,2) -- (3,2) -- (3,5) -- (5,5);
\draw[thick,dashed,red] (0,.3) -- (1.7,.3) -- (1.7,2.3) -- (2.7,2.3) -- (2.7,5.3) -- (5,5.3);
\draw[thick,dashed,red] (0,-.3) -- (2.3,-.3) -- (2.3,1.7) -- (3.3,1.7) -- (3.3,4.7) -- (5,4.7);
\draw[<->,thick,>=stealth,blue] (2,-.3) -- (2,2.3);
\draw[<->,thick,>=stealth,blue] (3,1.7) -- (3,5.3);
\end{tikzpicture}
\caption{A path containing two interior vertical segments.}
\end{subfigure}
\caption{Replacing interior vertical segments by 0-bend paths (in blue).}
\label{intseg}
\end{figure}

We finally introduce two gadgets $H$ (see Fig. \ref{gadget}) and $H'$, where $H'$ is the subgraph of $H$ induced by $\{b,c,4,5,6,7,8,9,10\}$, as follows. Denote by $\mathcal{P}'$ the set of vertical paths introduced so far and by $V'$ the set of vertices of the contact graph of $\mathcal{P}'$. Observe that $V'$ contains a copy of $V$ and that two vertices are adjacent in the contact graph of $\mathcal{P}'$ if and only if they are both copies of vertices in $V$ and the path $P$ of $\mathcal{P}$ corresponding to the edge between these two copies is a vertical path with no bend. Now, along each path $P_{uv}$ of $\mathcal{P}$ such that the vertical paths $P_u$ and $P_v$ of $\mathcal{P}'$ do not touch, we add gadgets $H$ and $H'$ as follows. Let $P_1, \ldots, P_k$ be the vertical paths of $\mathcal{P}'$ encountered in order when going along $P_{uv}$ from $(x_u,y_u)$ to $(x_v,y_v)$ and let  $u_j$ be the vertex of $V'$ corresponding to $P_j$, for $1 \leq j \leq k$. Note that $P_1$ (resp. $P_k$) is the path corresponding to vertex $u= u_1$ (resp. $v=u_k$) and that $P_j$, for $2\leq j\leq k-1$, is a path corresponding to an interior vertical segment of $P_{uv}$ (this implies in particular that $k \leq 4$). We add the gadget $H'$ in between $u_1$ and $u_2$ by identifying $u_1$ with $b$ and $u_2$ with $c$. Moreover, for any $2 \leq j \leq k-1$, we add the gadget $H$ in between $u_j$ and $u_{j+1}$ by identifying $u_j$ with $b$ and $u_{j+1}$ with $a$ (see Fig. \ref{gadgetpath} where $k=4$ and each box labeled $H$ (resp. $H'$) means that gadget $H$ (resp. $H'$) has been added by identifying the vertex lying to the left of the box to $b$ and the vertex lying on the right of the box to $a$ (resp.~$c$)). 

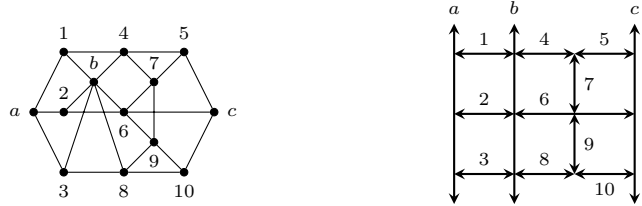
\begin{figure}
\centering
\begin{subfigure}[b]{.45\textwidth}
\centering
\begin{tikzpicture}[scale=.4]
\node[circ,label=left:{\scriptsize $a$}] (a) at (1,3) {};
\node[circ,label=below:{\scriptsize $3$}] (3) at (2,1) {};
\node[circ,label=above:{\scriptsize $2$}] (2) at (2,3) {};
\node[circ,label=above:{\scriptsize $1$}] (1) at (2,5) {};
\node[circ,label=above:{\scriptsize $b$}] (b) at (3,4) {};
\node[circ,label=below:{\scriptsize $8$}] (8) at (4,1) {};
\node[circ,label=below:{\scriptsize $6$}] (6) at (4,3) {};
\node[circ,label=above:{\scriptsize $4$}] (4) at (4,5) {};
\node[circ,label=below:{\scriptsize $9$}] (9) at (5,2) {};
\node[circ,label=above:{\scriptsize $7$}] (7) at (5,4) {};
\node[circ,label=below:{\scriptsize $10$}] (10) at (6,1) {};
\node[circ,label=above:{\scriptsize $5$}] (5) at (6,5) {};
\node[circ,label=right:{\scriptsize $c$}] (c) at (7,3) {};

\draw[-] (a) -- (1)
(a) -- (2)
(a) -- (3)
(1) -- (b)
(1) -- (4) 
(2) -- (b)
(2) -- (6)
(3) -- (b) 
(3) -- (8)
(b) -- (4)
(b) -- (6)
(b) -- (8)
(4) -- (5) 
(4) -- (7)
(6) -- (7) 
(6) -- (c)
(6) -- (9)
(8) -- (9)
(8) -- (10)
(7) -- (5)
(7) -- (9)
(9) -- (10)
(5) -- (c)
(10) -- (c);
\end{tikzpicture}
\end{subfigure}
\begin{subfigure}[b]{.45\textwidth}
\centering
\begin{tikzpicture}[scale=.8]
\draw[<->,thick,>=stealth] (1,1.5) -- (1,4.5) node[above] {\scriptsize $a$};
\draw[<->,thick,>=stealth] (1,2) -- (2,2) node[midway,above] {\scriptsize $3$};
\draw[<->,thick,>=stealth] (1,3) -- (2,3) node[midway,above] {\scriptsize $2$};
\draw[<->,thick,>=stealth] (1,4) -- (2,4) node[midway,above] {\scriptsize $1$};
\draw[<->,thick,>=stealth] (2,1.5) -- (2,4.5) node[above] {\scriptsize $b$};
\draw[<->,thick,>=stealth] (2,2) -- (3,2) node[midway,above] {\scriptsize $8$};
\draw[<->,thick,>=stealth] (2,3) -- (4,3) node[near start,above] {\scriptsize $6$};
\draw[<->,thick,>=stealth] (2,4) -- (3,4) node[midway,above] {\scriptsize $4$};
\draw[<->,thick,>=stealth] (3,2) -- (3,3) node[midway,right] {\scriptsize $9$};
\draw[<->,thick,>=stealth] (3,3) -- (3,4) node[midway,right] {\scriptsize $7$};
\draw[<->,thick,>=stealth] (3,2) -- (4,2) node[midway,below] {\scriptsize $10$};
\draw[<->,thick,>=stealth] (3,4) -- (4,4) node[midway,above] {\scriptsize $5$};
\draw[<->,thick,>=stealth] (4,1.5) -- (4,4.5) node[above] {\scriptsize $c$};
\end{tikzpicture}
\end{subfigure}
\caption{The gadget $H$ (left) and a $0$-bend CPG representation of it (right).}
\label{gadget}
\end{figure}

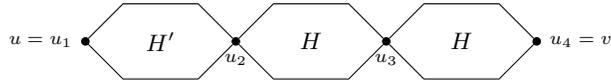
\begin{figure}[htb]
\centering
\begin{tikzpicture}
\node[circ,label=left:{\scriptsize $u=u_1$}] (u) at (1,0) {};
\node[circ,label=below:{\scriptsize $u_2$}] (u1) at (3,0) {};
\node[circ,label=below:{\scriptsize $u_3$}] (u2) at (5,0) {};
\node[circ,label=right:{\scriptsize $u_4=v$}] (u3) at (7,0) {};
\node[draw=none] at (2,0) {$H'$};
\node[draw=none] at (4,0) {$H$};
\node[draw=none] at (6,0) {$H$};

\draw[-] (u) -- (1.5,.5) -- (2.5,.5) -- (u1)
(u) -- (1.5,-.5) -- (2.5,-.5) -- (u1)
(u1) -- (3.5,.5) -- (4.5,.5) -- (u2)
(u1) -- (3.5,-.5) -- (4.5,-.5) -- (u2)
(u2) -- (5.5,.5) -- (6.5,.5) -- (u3)
(u2) -- (5.5,-.5) -- (6.5,-.5) -- (u3);
\end{tikzpicture}
\caption{Adding gadgets $H$ and $H'$.}
\label{gadgetpath}
\end{figure}

The resulting graph $G'$ remains $B_0$-CPG. Indeed, we may add $0$-bend CPG representations of the gadgets $H$ and $H'$ inside $\mathcal{R}_{P_{uv}}$ and at different heights so that they do not touch any other such gadget, as shown in Fig. \ref{addgad}. In Section \ref{app:ex} of the Appendix, we give a local example of the resulting $0$-bend CPG representation~$\mathcal{R}$.

\begin{figure}
\centering
\begin{subfigure}[b]{.45\textwidth}
\centering
\begin{tikzpicture}[scale=.8]
\draw[->,thick,>=stealth,blue] (1,-1) -- (1,2.1) node[near start,right] {\scriptsize $P_{u_j}$};
\draw[thick,dashed,red] (2,-1) -- (2,0) -- (6.2,0) -- (6.2,3);
\draw[<-,thick,>=stealth,blue] (5,0) --(5,3) node[near end,right] {\scriptsize $P_{u_{j+1}}$};
\draw[<->,thick,>=stealth] (.2,1.1) -- (.2,1.9);
\draw[<->,thick,>=stealth] (.6,1.1) -- (.6,1.5);
\draw[<->,thick,>=stealth] (.6,1.5) -- (.6,1.9);
\draw[<->,thick,>=stealth] (.2,1.1) -- (.6,1.1);
\draw[<->,thick,>=stealth] (.6,1.1) -- (1,1.1);
\draw[<->,thick,>=stealth] (.2,1.5) -- (1,1.5);
\draw[<->,thick,>=stealth] (.2,1.9) -- (.6,1.9);
\draw[<->,thick,>=stealth] (.6,1.9) -- (1,1.9);
\draw[thick,dashed,red] (0,-1) -- (0,2.1) -- (4,2.1) -- (4,3);
\draw[<->,thick,>=stealth] (1,1.1) -- (5,1.1);
\draw[<->,thick,>=stealth] (1,1.5) -- (5,1.5);
\draw[<->,thick,>=stealth] (1,1.9) -- (5,1.9);
\end{tikzpicture}
\caption{Adding gadget $H$.}
\label{addH}
\end{subfigure}
\hspace*{.5cm}
\begin{subfigure}[b]{.45\textwidth}
\centering
\begin{tikzpicture}[scale=.8]
\draw[->,thick,>=stealth,blue] (1,-1) -- (1,2.1) node[near start,right] {\scriptsize $P_u$};
\draw[thick,dashed,red] (2,-1) -- (2,0) -- (6,0) -- (6,3);
\draw[<-,thick,>=stealth,blue] (5,0) --(5,3) node[near end,right] {\scriptsize $P_{u_2}$};
\draw[thick,dashed,red] (0,-1) -- (0,2.1) -- (4,2.1) -- (4,3);
\draw[<->,thick,>=stealth] (1,1.1) -- (3,1.1);
\draw[<->,thick,>=stealth] (3,1.1) -- (5,1.1);
\draw[<->,thick,>=stealth] (1,1.5) -- (5,1.5);
\draw[<->,thick,>=stealth] (1,1.9) -- (3,1.9);
\draw[<->,thick,>=stealth] (3,1.9) -- (5,1.9);
\draw[<->,thick,>=stealth] (3,1.1) -- (3,1.5);
\draw[<->,thick,>=stealth] (3,1.5) -- (3,1.9);
\end{tikzpicture}
\caption{Adding gadget $H'$.}
\label{addH'}
\end{subfigure}
\caption{Locally adding gadgets to control the color of the vertices.}
\label{addgad}
\end{figure}
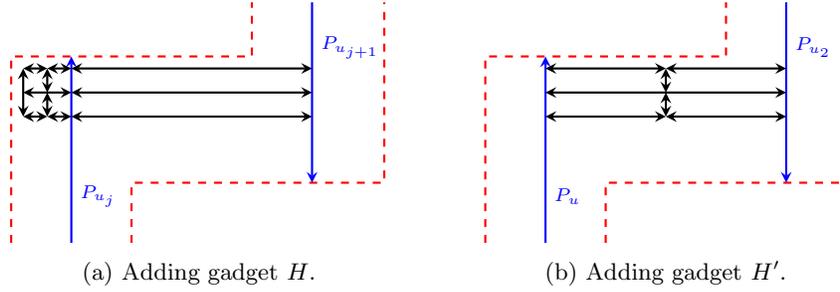

We now show that $G$ is 3-colorable if and only if $G'$ is. To this end, we prove the following.

\begin{Claim}
\label{claim1}
\begin{minipage}[t]{\linegoal}
\begin{itemize}[leftmargin=*]
\item[$\bullet$] In any 3-coloring $\mathbf{c}$ of $H'$, we have $\mathbf{c}(b) \neq \mathbf{c}(c)$.
\item[$\bullet$] In any 3-coloring $\mathbf{c}$ of $H$, we have $\mathbf{c}(a) = \mathbf{c}(b)$ and $\mathbf{c}(b) \neq \mathbf{c}(c)$.\\
\end{itemize}
\end{minipage}
\end{Claim}

\begin{cproof}
Let $\mathbf{c} \colon \{a,b,c,1,2,3,4,5,6,7,8,9,10\} \to \{blue,red,green\}$ be a 3-coloring of $H$ and assume without loss of generality that $\mathbf{c}(b) = blue$. Clearly, at least two vertices among $4$, $6$ and $8$ have the same color. If vertices $4$, $6$ and $8$ all have the same color, say $red$, then either $\mathbf{c}(7) = blue$ and $\mathbf{c}(9) = green$, or $\mathbf{c}(7) = green$ and $\mathbf{c}(9) = blue$. Therefore, $\{\mathbf{c}(5),\mathbf{c}(10)\} = \{blue,green\}$ and since $c$ is adjacent to all three colors, we then obtain a contradiction. Now if vertices $4$ and $8$ have the same color, say $red$, then vertex $6$ has color $green$ and both $7$ and $9$ have color $blue$, a contradiction. Hence, either $\mathbf{c}(4) = \mathbf{c}(6) \neq \mathbf{c}(8)$ or $\mathbf{c}(8) = \mathbf{c}(6) \neq \mathbf{c}(4)$. By symmetry, we may assume that vertices $4$ and $6$ have the same color, say $red$, and that vertex $8$ has color $green$. This implies that vertex $7$ has color $green$, vertices $9$ and $5$ have color $blue$ and vertex $10$ has color $red$; but then, $\mathbf{c}(c) = green \neq \mathbf{c}(b)$. This proves the first point of the claim. Observe that each coloring of $b$ and $c$ with distinct colors can be extended to a 3-coloring of $H'$ and $H$.  

As for the second point, since vertices $4$ and $6$ have color $red$, both $1$ and $2$ must have color $green$, and since vertex $8$ has color $green$, vertex $3$ must have color $red$. Consequently, $\mathbf{c}(a) = blue = \mathbf{c}(b)$.
\end{cproof}\\

We finally conclude the proof of Theorem \ref{thm:3c}. By Claim \ref{claim1}, if $\mathbf{c}$ is a 3-coloring of $G'$ then, for any path $P_{uv}$ of $\mathcal{P}$, we have $\mathbf{c}(u_1) \neq \mathbf{c}(u_2)$ and $\mathbf{c}(u_2) =  \mathbf{c}(u_i)$ for all $3\leq i \leq k$. Hence, $\mathbf{c}$ induces a 3-coloring of $G$. Conversely, it is easy to see that any 3-coloring of $G$ can be extended to a 3-coloring of $G'$. \qed
\end{proof}


\section{Conclusion}
\label{sec:conclusion}

We conclude by stating the following open questions:

\begin{enumerate}
\item[1.] Are $B_1$-CPG graphs 5-colorable?
\item[2.] Can we characterize those planar graphs which are CPG?
\item[3.] Is {\sc Recognition} $\mathsf{NP}$-complete for $B_k$-CPG graphs with $k >0$ ?
\end{enumerate}

 
 \bibliographystyle{plain}
\bibliography{references}

\newpage

\section{Appendix}

\subsection{Proof of the existence of an infinite family of 6-regular CPG graphs}
\label{app:6reg}

It is clear that there exists an infinite family of CPG graphs having a vertex of degree at most 5. On the other hand, the existence of an infinite family of 6-regular CPG graphs is a priori not guaranteed. We can however show that it is the case. Indeed, consider a 4-regular planar graph $G$. From \cite{tamassia}, it follows that $G$ admits an embedding on the grid where each vertex is mapped to a distinct grid-point and each edge $e$ is mapped to a path on the grid whose endpoints are the grid-points corresponding to the endvertices of $e$, in such a manner that all paths are interiorly disjoint. We derive therefrom a CPG representation of the line graph $L(G)$ of $G$ which is 6-regular: each edge $e$ in the embedding of $G$ on the grid corresponds to the path associated with vertex $v_e$ of $L(G)$ and each vertex in the embedding of $G$ on the grid is the grid-point where the four corresponding paths pairwise touch. The existence of an infinite family of 6-regular CPG graphs then follows from the existence of an infinite family of 4-regular planar graphs~\cite{4reg}.

\subsection{4-regular graphs on 7 vertices are non-planar}
\label{4-regular}

We show that every 4-regular graph $G=(V,E)$ on 7 vertices contains $K_{3,3}$ as a minor. Let $v_1 \in V$ and $v_2,v_3 \notin N(v_1)$. Then $v_2$ and $v_3$ have at least 3 neighbors in $N(v_1)$. If $N(v_2) \cap N(v_1) \subseteq N(v_3)$, then clearly $G$ contains $K_{3,3}$ as a minor. Hence, $v_2$ has a neighbor $x \in N(v_1)$ which is non-adjacent to $v_3$; by symmetry, $v_3$ also has a neighbor $y \in N(v_1)$ which is non-adjacent to $v_2$. But then $v_2$ and $v_3$ must be adjacent as well as $x$ and $y$ (recall that the graph is 4-regular), and again $G$ contains $K_{3,3}$ as a minor.

\subsection{Proof of Theorem \ref{6color}}
\label{app:6color}

If $G$ is a CPG graph, by Lemma \ref{degree} $G$ is either 6-regular or has a vertex of degree at most 5. In the first case, the result follows from Theorem \ref{K7} and Brooks' Theorem (any graph $H$, apart from the complete graph and the cycle of odd length, may be colored using $\Delta$ colors where $\Delta$ is the maximum degree of $H$). Otherwise, $G$ contains a vertex of degree at most 5 and we conclude by induction.

\subsection{Proof of Proposition \ref{K6}}
\label{app:k6}
Before turning to the proof of Proposition \ref{K6}, we first make several observations regarding $B_1$-CPG graphs. All following statements remain true up to reflection across a vertical or horizontal line and by inverting the role of rows and columns.

\begin{observation}
Let $G$ be a $B_1$-CPG graph and $\mathcal{R}=(\mathcal{G},\mathcal{P})$ be a $1$-bend CPG representation of $G$. Assume there exist two distinct grid-points in $\mathcal{G}$, $p = (x_j,y_k)$ and $p' = (x_i,y_l)$, with $j < i$, such that $p$ (resp. $p'$) is an endpoint or the bend-point of a path $P$ (resp. $P'$) in $\mathcal{P}$. Then, if $P$ (resp. $P'$) uses the grid-edge on the left of $p$ (resp. on the right of $p'$), $P$ and $P'$ can not touch (see Fig. \ref{fobs1}). 
\label{obs1}
\end{observation}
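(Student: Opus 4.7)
The natural approach is to show that, under the hypotheses, $P$ and $P'$ occupy disjoint column ranges of $\mathcal{G}$, so that they share no grid-point at all and hence cannot touch. I would start with the claim that every grid-point of $P$ has first coordinate at most $x_j$. If $p$ is an endpoint of $P$, then the hypothesis on the grid-edge used forces $P$ to begin with a horizontal step leftward along the row of $p$; with only one bend allowed, $P$ may turn vertically at most once, at some grid-point $(x_{j'}, y_k)$ with $x_{j'}<x_j$, after which it proceeds within a column whose first coordinate is strictly less than $x_j$. If instead $p$ is the bend-point of $P$, then the unique bend of $P$ is located at $p$ itself: one arm extends leftward along the row of $p$ (within columns of first coordinate strictly less than $x_j$), and the other arm is vertical along column $x_j$. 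In both situations $P$ is confined to the half-plane of columns with first coordinate at most $x_j$.

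Applying the symmetric argument to $P'$---replacing ``left'' by ``right'' and interchanging the roles of $(p,P)$ and $(p',P')$---shows that every grid-point of $P'$ has first coordinate at least $x_i$. Since $j<i$, the columns with first coordinate at most $x_j$ and those with first coordinate at least $x_i$ are disjoint, and therefore $P$ and $P'$ share no grid-point at all; in particular they cannot touch, which is exactly what the observation asserts.

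The only delicate point in this argument is the case analysis above: one must explicitly invoke the one-bend restriction to rule out the possibility that $P$, after proceeding leftward from $p$ and then turning vertical, turns once more and re-enters columns whose first coordinate is strictly greater than $x_j$. This is precisely where the hypothesis that $\mathcal{R}$ is a $1$-bend representation (rather than a $k$-bend one with $k\ge 2$) is essential; once this shape constraint is established, the conclusion is immediate from the disjointness of the column ranges.
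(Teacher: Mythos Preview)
Your argument is correct and is precisely the formalization the paper leaves implicit: the observation is stated in the paper without proof, supported only by Fig.~\ref{fobs1}, and your column-range analysis is exactly the intended reasoning. The one minor imprecision---saying the leftward arm in the bend-point case lies in columns \emph{strictly} less than $x_j$, when of course the bend-point itself is at column $x_j$---does not affect your conclusion that $P$ is confined to columns $\le x_j$ and $P'$ to columns $\ge x_i$.
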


\begin{figure}[ht]
\centering
\begin{minipage}{.3\textwidth}
\centering
\begin{tikzpicture}[scale=0.7]
\draw[dotted] (3,1)--(0,1) node[left] {\tiny $y_l$};
\draw[dotted] (3,2)--(0,2) node[left] {\tiny $y_k$};

\draw[dotted] (1,3)--(1,0) node[below] {\tiny $x_j$};
\draw[dotted] (2,3)--(2,0) node[below] {\tiny $x_i$};

\draw[thick,->,>=stealth] (0.5,2)--node[midway,above] {\tiny $P$}(1,2);

\draw[thick,<-,>=stealth] (2,1)--node[above] {\tiny $P'$}(2.5,1);	
\end{tikzpicture}
\end{minipage}
\begin{minipage}{.3\textwidth}
\centering
\begin{tikzpicture}[scale=0.7]
\draw[dotted] (3,1)--(0,1) node[left] {\tiny $y_l$};
\draw[dotted] (3,2)--(0,2) node[left] {\tiny $y_k$};

\draw[dotted] (1,3)--(1,0) node[below] {\tiny $x_j$};
\draw[dotted] (2,3)--(2,0) node[below] {\tiny $x_i$};

\draw[thick,-,>=stealth] (0.5,2)--node[midway,above] {\tiny $P$}(1,2)--(1,1.5);

\draw[thick,<-,>=stealth] (2,1)--node[above] {\tiny $P'$}(2.5,1);	
\end{tikzpicture}
\end{minipage}
\begin{minipage}{.3\textwidth}
\centering
\begin{tikzpicture}[scale=0.7]
\draw[dotted] (3,1)--(0,1) node[left] {\tiny $y_l$};
\draw[dotted] (3,2)--(0,2) node[left] {\tiny $y_k$};

\draw[dotted] (1,3)--(1,0) node[below] {\tiny $x_j$};
\draw[dotted] (2,3)--(2,0) node[below] {\tiny $x_i$};

\draw[thick,-,>=stealth] (0.5,2)--node[midway,above] {\tiny $P$}(1,2)--(1,1.5);

\draw[thick,-,>=stealth] (2.5,1)--node[midway,above] {\tiny $P'$}(2,1)--(2,0.5);	
\end{tikzpicture}
\end{minipage}
\caption{Examples where $P$ and $P'$ can not touch.}
\label{fobs1}
\end{figure}
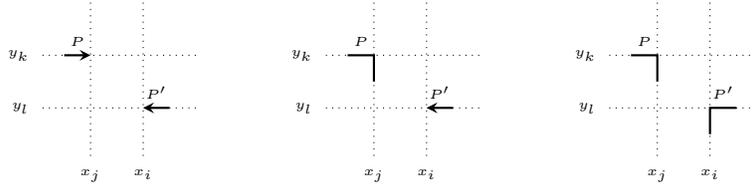

\begin{observation}
Let $G$ be a $B_1$-CPG graph and $\mathcal{R}=(\mathcal{G},\mathcal{P})$ be a $1$-bend CPG representation of $G$. Assume there exist three distinct grid-points in $\mathcal{G}$, $p = (x_i,y_t)$, $p' = (x_j,y_s)$ and $p'' = (x_k,y_r)$, with $k < j < i$, such that $p$ (resp. $p'$) is an endpoint of a path $P$ (resp. $P'$) in $\mathcal{P}$ and $p''$ is an endpoint or the bend-point of a path $P''$ in $\mathcal{P}$. Then, if $P''$ (resp. $P$, $P'$) uses the grid-edge on the left of $p''$ (resp. below $p$, below $p'$), all three paths can not pairwise touch (see Fig. \ref{fobs2}). 
\label{obs2}
\end{observation}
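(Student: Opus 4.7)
The plan is to argue by contradiction: I will suppose that all three paths pairwise touch and extract a geometric impossibility. First I would fix the shapes of $P$ and $P'$. Since $p$ is an endpoint of $P$ that uses the edge below $p$, the path $P$ leaves $p$ going straight downward and has at most one bend. Because $P''$ uses the left edge of $p'' = (x_k, y_r)$ and has at most one bend, its horizontal branch and (if any) vertical branch are both contained in the halfplane of columns $\le x_k$; since $x_k < x_i$, the path $P$ has no chance of reaching $P''$ unless its bend exists and turns to the left. Hence $P$ is L-shaped, with a vertical stretch from $(x_i, y_t)$ down to a bend-point $(x_i, y_p)$ and then a horizontal stretch ending at some $(x_p', y_p)$ with $x_p' \le x_k$. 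The analogous analysis forces $P'$ to be an L-shape going down from $p'$ and turning left at row $y_q < y_s$, ending at some $(x_q', y_q)$ with $x_q' \le x_k$.

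Next, I would determine how $P$ and $P'$ can touch. Their only candidate common grid-points are $(x_j, y_p)$, where $P$'s horizontal segment crosses column $x_j$, and, when $y_p = y_q$, the whole overlap of the two horizontal rows. A direct case analysis on the relative positions of $y_p$, $y_q$, and $y_s$ rules out every configuration except $y_p = y_s$: for $y_q \le y_p < y_s$ the point $(x_j, y_p)$ lies strictly inside both $P$ and $P'$ (using $x_k < x_j < x_i$ and $x_q' < x_j$), violating interior-disjointness; for $y_p = y_q$ the two horizontal segments overlap on more than one grid-point, again producing an interior-interior intersection; for $y_p < y_q$ or $y_p > y_s$ the two paths share no grid-point at all. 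The only surviving possibility places $P$'s horizontal on row $y_s$, meeting $P'$ at $p'$, which then lies in the interior of $P$.

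Finally, with $P$'s horizontal segment pinned to row $y_s$, the trace of $P$ in columns $\le x_k$ is confined to row $y_s$, and the trace of $P'$ in those columns is confined to the strictly lower row $y_q$. Consequently, the contact of $P''$ with $P$ must occur on row $y_s$, and the contact of $P''$ with $P'$ on row $y_q$, so $P''$ has to visit two distinct rows while staying inside columns $\le x_k$ and using only its single bend. A case analysis on the structure of $P''$, according to whether $p''$ is an endpoint or the bend-point of $P''$ and according to the direction of the possible second segment, then shows that pinning the horizontal branch of $P''$ at row $y_s$ (forced by the contact with $P$) and a vertical branch passing through row $y_q$ (forced by the contact with $P'$) conflicts, in each configuration, with either the one-bend budget of $P''$ or with interior-disjointness with $P$ or $P'$. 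The main obstacle is exactly this concluding case analysis: one has to track carefully, for every candidate touch point, whether it is a (shared) endpoint or whether one of the paths passes through it as an interior point, and verify that no placement is simultaneously consistent with all three pairwise contacts.
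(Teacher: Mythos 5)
The paper never actually proves this observation: it is asserted and illustrated by Fig.~\ref{fobs2}, so there is no argument of the authors' to measure yours against, and I assess your proposal on its own. Its first three quarters are sound. $P''$ is indeed confined to the half-plane of columns at most $x_k$; this forces both $P$ and $P'$ to be one-bend ``down-then-left'' paths whose horizontal arms reach column $x_k$; and your analysis of how $P$ and $P'$ can then touch is correct, isolating the single surviving case in which the horizontal arm of $P$ lies on row $y_s$ and passes through $p'$.

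The gap is in the concluding step, and it is not just that the case analysis on $P''$ is left as a sketch: the claim it is meant to establish is false. In your surviving case all three contacts can be realized. Take columns $1<2<3<4$ with $x_k=2$, $x_j=3$, $x_i=4$, and rows $1<2<3$ with $y_r=y_s=2$ and $y_t=3$ (coordinates below are (column, row)). Let $P$ run from $(4,3)$ down to $(4,2)$ and left to $(2,2)$; let $P'$ run from $(3,2)$ down to $(3,1)$ and left to $(0,1)$; let $P''$ run from $(2,2)$ left to $(1,2)$ and down to $(1,1)$. Each path has one bend and uses the prescribed grid-edge at its distinguished point, and the three pairwise intersections are the single grid-points $(3,2)$, $(2,2)$ and $(1,1)$, each an endpoint of at least one of the two paths involved; so the paths are interiorly disjoint and pairwise touch. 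In particular, your assertion that the contact with $P$ forces the horizontal branch of $P''$ onto row $y_s$ in a way that conflicts with the contact with $P'$ is wrong: the contact with $P$ may be a shared endpoint at the left end of $P$'s horizontal arm, as above, after which the single bend of $P''$ comfortably reaches $P'$. The resolution is that the observation must be read together with the row configuration of Fig.~\ref{fobs2}, i.e.\ with $p$ on a row not above that of $p'$ (in the figure $y_t<y_s<y_r$). Under the hypothesis $y_t\le y_s$ your own step already finishes the proof: the surviving case would place the horizontal arm of $P$ on row $y_s$ while it must lie strictly below row $y_t\le y_s$, which is impossible, so $P$ and $P'$ cannot touch at all and no analysis of $P''$ beyond its confinement to columns at most $x_k$ is needed. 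You should either import that hypothesis explicitly or flag that the statement as literally written requires it.
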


\begin{figure}[ht]
\centering
\begin{subfigure}[b]{.45\textwidth}
\centering
\begin{tikzpicture}[scale=0.5]
\draw[dotted] (4,1)--(0,1) node[left] {\tiny $y_t$};
\draw[dotted] (4,2)--(0,2) node[left] {\tiny $y_s$};
\draw[dotted] (4,3)--(0,3) node[left] {\tiny $y_r$};

\draw[dotted] (1,4)--(1,0) node[below] {\tiny $x_k$};
\draw[dotted] (2,4)--(2,0) node[below] {\tiny $x_j$};
\draw[dotted] (3,4)--(3,0) node[below] {\tiny $x_i$};

\draw[thick,->,>=stealth] (.5,3)--node[above] {\tiny $P''$}(1,3);
\draw[thick,->,>=stealth] (2,1.5)--node[right] {\tiny $P'$}(2,2);

\draw[thick,->,>=stealth] (3,.5)--node[right] {\tiny $P$}(3,1);
\end{tikzpicture}
\end{subfigure}
\hspace*{1cm}
\begin{subfigure}[b]{.45\textwidth}
\centering
\begin{tikzpicture}[scale=0.5]
\draw[dotted] (4,1)--(0,1) node[left] {\tiny $y_t$};
\draw[dotted] (4,2)--(0,2) node[left] {\tiny $y_s$};
\draw[dotted] (4,3)--(0,3) node[left] {\tiny $y_r$};

\draw[dotted] (1,4)--(1,0) node[below] {\tiny $x_k$};
\draw[dotted] (2,4)--(2,0) node[below] {\tiny $x_j$};
\draw[dotted] (3,4)--(3,0) node[below] {\tiny $x_i$};

\draw[thick,-,>=stealth] (.5,3)--node[above] {\tiny $P''$}(1,3)--(1,2.5);
\draw[thick,->,>=stealth] (2,1.5)--node[right] {\tiny $P'$}(2,2);

\draw[thick,->,>=stealth] (3,.5)--node[right] {\tiny $P$}(3,1);
\end{tikzpicture}
\end{subfigure}
\caption{Examples where $P$, $P'$ and $P''$ can not pairwise touch.}
\label{fobs2}
\end{figure}
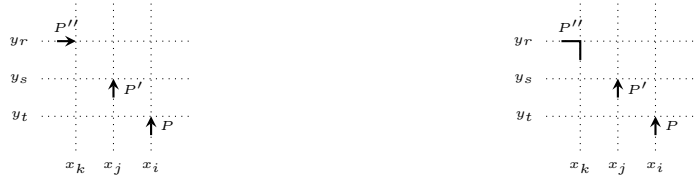

\textit{Proof of Proposition \ref{K6}.}
$K_6$ is in $B_2$-CPG as shown in Fig. \ref{K6B2}. Assume by contradiction that $K_6$ is a $B_1$-CPG graph and consider a $1$-bend CPG representation $\mathcal{R}=(\mathcal{G},\mathcal{P})$ of $K_6$. Since every vertex is of degree 5, we can assume, without loss of generality, that every endpoint of a path belongs to another path.
 
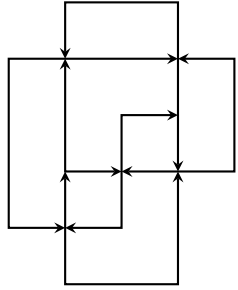
\begin{figure}[htb]
\centering   
\begin{tikzpicture}[scale=.75]
\draw[thick,<->,>=stealth] (1,1)--(0,1)--(0,4)--(3,4);
\draw[thick,<->,>=stealth] (1,1)--(2,1)--(2,3)--(3,3);
\draw[thick,<->,>=stealth] (1,4)--(1,2)--(2,2);
\draw[thick,<->,>=stealth] (1,4)--(1,5)--(3,5)--(3,2);
\draw[thick,<->,>=stealth] (1,2)--(1,0)--(3,0)--(3,2);
\draw[thick,<->,>=stealth] (2,2)--(4,2)--(4,4)--(3,4);
\end{tikzpicture}
\caption{A $2$-bend CPG representation of $K_6$.}
\label{K6B2}
\end{figure} 

In the following, let $P_a$ with $a \in V(K_6)$, be a path in $\mathcal{P}$ and denote by $p_1$ and $p_2$ its two endpoints.

\begin{Claim}
\label{c1}
$\mathcal{R}$ contains no grid-point of type I.
\end{Claim} 

\begin{cproof} Assume, without loss of generality, that $p_1$ is a grid-point of type I. Since $d(a) = 5$, there exists a path $P_b$ either touching $P_a$ in its interior or having $p_2$ as an endpoint. If $P_a$ has no bend, it follows from Observation \ref{obs1} that $P_b$ can not touch one of the paths touching $P_a$ at $p_1$. If $P_a$ has a bend, we may assume without loss of generality that the horizontal segment of $P_a$ contains $p_1$ and lies to the left and below its vertical segment (see Fig. \ref{fc1}). Then, either we conclude similarly by Observation \ref{obs1} (see Fig. \ref{fc1a} where colors are used to specify which paths cannot touch depending on the position of $P_b$); or, denote by $P_c$ the other path not touching $P_a$ at $p_1$. If $P_b$ has $p_2$ as an endpoint, then by Observation \ref{obs1}, $P_b$ must use the grid-edge on the left of $p_2$ and $P_c$ must touch the interior of the vertical segment of $P_a$ and lie to its left; and we conclude by Observation \ref{obs2} that $P_b$, $P_c$ and the path with endpoint $p_1$ using the grid-edge below $p_1$ can then not pairwise touch (see Fig. \ref{fc1b} where colors are used to specify which paths cannot pairwise touch). 

\begin{figure}[ht]
\begin{subfigure}[b]{.45\textwidth}
\centering
\begin{tikzpicture}[scale=0.7]
\node[invisible, label=below right:{\tiny $P_a$}] (Pa) at (0,0) {};

\draw[<-,thick,>=stealth] (-2,0)--(0,0);
\draw[->,thick,>=stealth] (0,0)--(0,2);
\draw[->,thick,red,>=stealth] (-3,0)--(-1.98,0);
\draw[<-,thick,red,>=stealth] (0,1)--(1,1) node[right] {\tiny $P_b$};
\draw[<-,thick,red,>=stealth] (0,2) -- (1,2) node[right] {\tiny $P_b$};
\draw[->,thick,blue,>=stealth] (-2,1)--(-2,0.02);
\draw[<-,thick,blue,>=stealth] (-0.6,0)--(-0.6,-1) node[midway,left] {\tiny $P_b$};
\draw[->,thick,green,>=stealth] (-2,-1)--(-2,-0.02);
\draw[<-,thick,green,>=stealth] (-1.2,0)--(-1.2,1) node[above] {\tiny $P_b$};
\draw[->,thick,green,>=stealth] (0,3) -- (0,2) node[midway,left] {\tiny $P_b$};
\end{tikzpicture}
\caption{Different positions for $P_b$.}
\label{fc1a}
\end{subfigure}
\hspace*{1cm}
\begin{subfigure}[b]{.45\textwidth}
\centering
\begin{tikzpicture}[scale=0.7]
\node[invisible, label=below right:{\tiny $P_a$}] (Pa) at (0,0) {};

\draw[<-,thick,>=stealth] (-2,0)--(0,0);
\draw[-,thick,>=stealth] (0,0)--(0,2);
\draw[->,thick,>=stealth] (-3,0)--(-1.98,0);
\draw[->,thick,>=stealth] (-2,1)--(-2,0.02);
\draw[->,thick,blue,>=stealth] (-2,-1)--(-2,-0.02);
\draw[<-,thick,blue,>=stealth] (0,1)--(-1,1) node[left] {\tiny $P_c$};
\draw[<-,thick,blue,>=stealth] (0,2)--(-1,2) node[left] {\tiny $P_b$};
\end{tikzpicture}
\caption{Three paths that can not pairwise touch.}
\label{fc1b}
\end{subfigure}
\caption{$P_a$ has a bend.}
\label{fc1}
\end{figure}
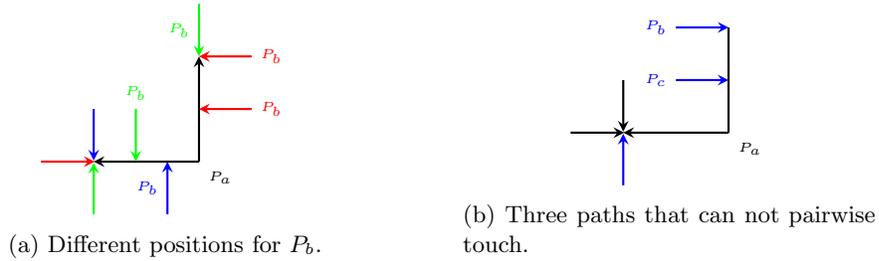

Otherwise, $P_b$ touches the interior of the vertical segment of $P_a$ while lying to its left, and $P_c$ contains $p_2$; but then, $P_c$ has an endpoint $p$ lying to the right of the vertical segment of $P_a$ and we distinguish three cases depending on the row on which is $p$. First, if $p$ lies on the same row as $p_2$, we conclude by Observation \ref{obs2} that three $P_b$, $P_c$ and the path with endpoint $p_1$ using the grid-edge below $p_1$ cannot pairwise touch (see Fig. \ref{figc1a} where colors are used to specify which paths cannot pairwise touch). Second, if $P_c$ has a bend and $p$ lies on a row above the row of $p_2$, then $P_c$ cannot touch the path with endpoint $p_1$ using the grid-edge below $p_1$ (see Fig. \ref{figc1b} where colors are used to specify which paths cannot touch). Finally, if $P_c$ has a bend and $p$ lies on a row below the row of $p_2$, then $P_c$, $P_b$ and the path with endpoint $p_1$ using the grid-edge on the left of $p_1$ cannot pairwise touch (see Fig. \ref{figc1c} where colors are used to specify which paths cannot pairwise touch), which concludes the proof. 
\end{cproof}

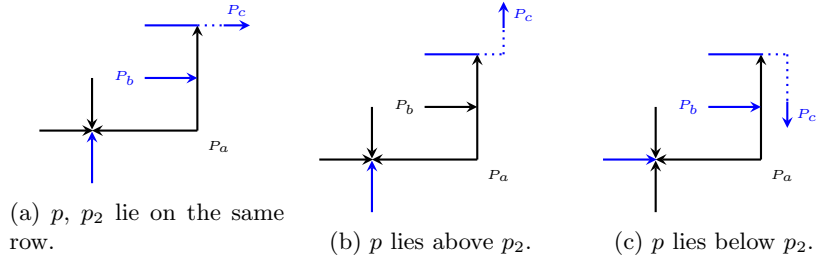
\begin{figure}[ht]
\centering
\begin{subfigure}[b]{.3\textwidth}
\centering
\begin{tikzpicture}[scale=0.7]
\node[invisible, label=below right:{\tiny $P_a$}] (Pa) at (0,0) {};

\draw[<-,thick,>=stealth] (-2,0)--(0,0);
\draw[->,thick,>=stealth] (0,0)--(0,2);
\draw[->,thick,>=stealth] (-3,0)--(-1.98,0);
\draw[->,thick,>=stealth] (-2,1)--(-2,0.02);
\draw[->,thick,>=stealth,blue] (-2,-1)--(-2,-0.02);
\draw[<-,thick,blue,>=stealth] (0,1)--(-1,1) node[left] {\tiny $P_b$};
\draw[-,thick,blue,>=stealth] (0,2)--(-1,2);
\draw[dotted,thick,blue,>=stealth] (0,2)--(.5,2);
\draw[->,thick,blue,>=stealth] (.5,2)--(1,2) node[midway,above] {\tiny $P_c$};
\end{tikzpicture}
\caption{$p$, $p_2$ lie on the same row.}
\label{figc1a}
\end{subfigure}
\begin{subfigure}[b]{.3\textwidth}
\centering
\begin{tikzpicture}[scale=0.7]
\node[invisible, label=below right:{\tiny $P_a$}] (Pa) at (0,0) {};

\draw[<-,thick,>=stealth] (-2,0)--(0,0);
\draw[->,thick,>=stealth] (0,0)--(0,2);
\draw[->,thick,>=stealth] (-3,0)--(-1.98,0);
\draw[->,thick,>=stealth] (-2,1)--(-2,0.02);
\draw[->,thick,>=stealth,blue] (-2,-1)--(-2,-0.02);
\draw[<-,thick,>=stealth] (0,1)--(-1,1) node[left] {\tiny $P_b$};
\draw[-,thick,blue,>=stealth] (0,2)--(-1,2);
\draw[dotted,thick,blue,>=stealth] (0,2)--(.5,2);
\draw[dotted,thick,blue,>=stealth] (.5,2)--(.5,2.5);
\draw[->,thick,blue,>=stealth] (.5,2.5)--(.5,3) node[midway,right] {\tiny $P_c$};
\end{tikzpicture}
\caption{$p$ lies above $p_2$.}
\label{figc1b}
\end{subfigure}
\begin{subfigure}[b]{.3\textwidth}
\centering
\begin{tikzpicture}[scale=0.7]
\node[invisible, label=below right:{\tiny $P_a$}] (Pa) at (0,0) {};

\draw[<-,thick,>=stealth] (-2,0)--(0,0);
\draw[->,thick,>=stealth] (0,0)--(0,2);
\draw[->,thick,>=stealth,blue] (-3,0)--(-1.98,0);
\draw[->,thick,>=stealth] (-2,1)--(-2,0.02);
\draw[->,thick,>=stealth] (-2,-1)--(-2,-0.02);
\draw[<-,thick,>=stealth,blue] (0,1)--(-1,1) node[left] {\tiny $P_b$};
\draw[-,thick,blue,>=stealth] (0,2)--(-1,2);
\draw[dotted,thick,blue,>=stealth] (0,2)--(.5,2);
\draw[dotted,thick,blue,>=stealth] (.5,2)--(.5,1.1);
\draw[->,thick,blue,>=stealth] (.5,1.1)--(.5,.6) node[midway,right] {\tiny $P_c$};
\end{tikzpicture}
\caption{$p$ lies below $p_2$.}
\label{figc1c}
\end{subfigure}
\caption{An endpoint of $P_c$ lies to the right of the vertical segment of $P_a$.}
\end{figure}

\begin{Claim}
\label{c2}
$\mathcal{R}$ contains no grid-point of type II.
\end{Claim}

\begin{cproof} We first show that both endpoints of any path in $\mathcal{P}$ cannot be grid-points of type II. For the sake of contradiction, assume, without loss of generality, that both $p_1$ and $p_2$ are grid-points of type II. If $P_a$ has no bend, then by Observation \ref{obs1}, one path touching $P_a$ at $p_1$ and one path touching $P_a$ at $p_2$ cannot touch (see Fig. \ref{samesub} where colors are used to specify which paths cannot touch). 

\begin{figure}[ht]
\centering
\begin{subfigure}[b]{.3\textwidth}
\centering
\begin{tikzpicture}
\draw[<->,thick,>=stealth] (-1,0) -- node[invisible,label=above:{\tiny $P_a$},midway] {} (1,0);
\draw[->,thick,>=stealth,blue] (-1.5,0) -- (-1,0);
\draw[<-,thick,>=stealth,blue] (1,0) -- (1.5,0);
\draw[-,thick,>=stealth] (-1,.5) -- (-1,-.5);
\draw[-,thick,>=stealth] (1,.5) -- (1,-.5);
\end{tikzpicture}
\end{subfigure}
\begin{subfigure}[b]{.3\textwidth}
\centering
\begin{tikzpicture}
\draw[<->,thick,>=stealth] (-1,0) -- node[invisible,label=above:{\tiny $P_a$},midway] {} (1,0);
\draw[->,thick,>=stealth,blue] (-1,-.5) -- (-1,0);
\draw[-,thick,>=stealth,blue] (1,.5) -- (1,0) -- (1.5,0);
\draw[->,thick,>=stealth,red] (1,-.5) -- (1,0);
\draw[-,thick,>=stealth,red] (-1,.5) -- (-1,0) -- (-1.5,0);
\end{tikzpicture}
\end{subfigure}
\begin{subfigure}[b]{.3\textwidth}
\centering
\begin{tikzpicture}
\draw[<->,thick,>=stealth] (-1,0) -- node[invisible,label=above:{\tiny $P_a$},midway] {} (1,0);
\draw[->,thick,>=stealth,blue] (-1.5,0) -- (-1,0);
\draw[-,thick,>=stealth] (-1,.5) -- (-1,-.5);
\draw[-,thick,>=stealth,blue] (1,.5) -- (1,0) -- (1.5,0);
\draw[->,thick,>=stealth] (1,-.5) -- (1,0);
\end{tikzpicture}
\end{subfigure}
\caption{Examples where two paths cannot touch when $P_a$ has no bend.}
\label{samesub}
\end{figure}
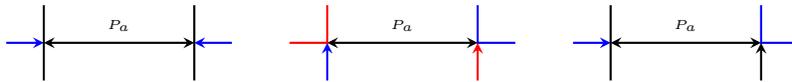

If $P_a$ has a bend, assume without loss of generality that the horizontal segment of $P_a$ contains $p_1$ and lies to the left and below of its vertical segment (see Fig. \ref{fc2}). If $p_1$ is a grid-point of type II.b, then $p_1$ is the bend-point or an endpoint of a path using the grid-edge below $p_1$. But then, since $p_2$ is assumed to be grid-point of type II, it is similarly an endpoint or the bend-point of a path using the grid-edge above $p_2$; and we conclude by Observation \ref{obs1} that those two paths cannot touch. We conclude by symmetry that $p_2$ can neither be a grid-point of type II.b and assume henceforth that both $p_1$ and $p_2$ are grid-points of type II.a. Now since $d(a) = 5$, there exists a path $P_b$ touching $P_a$ in its interior. If $P_b$ touches the vertical (resp. horizontal) segment of $P_a$ and lies to its right (resp. below) then, by Observation \ref{obs1}, $P_b$ cannot touch the other path having $p_1$ (resp. $p_2$) as an endpoint (see Fig. \ref{fc21} where colors are used to specify which paths cannot touch); and, if $P_b$ touches the vertical segment of $P_a$ and lies to its left, or if $P_b$ touches the horizontal segment of $P_a$ and lies above it, then by Observation \ref{obs2}, $P_b$ and the other paths having respectively $p_1$ and $p_2$ as an endpoint can then not pairwise touch (see Fig. \ref{fc22} where colors are used to specify which paths can not pairwise touch), which concludes the first part of this proof.

\begin{figure}[ht]
\centering
\begin{subfigure}[b]{.45\textwidth}
\centering
\begin{tikzpicture}[scale=0.7]
\node[invisible, label=below right:{\tiny $P_a$}] (Pa) at (0,0) {};

\draw[<-,thick,>=stealth] (-2,0)--(0,0);
\draw[->,thick,>=stealth] (0,0)--(0,2);
\draw[-,thick,>=stealth] (-2,1)--(-2,-1);
\draw[-,thick,>=stealth] (-1,2)--(1,2);
\draw[->,thick,>=stealth,blue] (-3,0)--(-2,0);
\draw[<-,thick,>=stealth] (0,2)--(0,3);
\draw[<-,thick,>=stealth,blue] (0,1)--(1,1) node[right] {\tiny $P_b$};
\end{tikzpicture}
\caption{$P_b$ touches $P_a$ on the right.}
\label{fc21}
\end{subfigure}
\hspace*{1cm}
\begin{subfigure}[b]{.45\textwidth}
\centering
\begin{tikzpicture}[scale=0.7]
\node[invisible,label=below right:{\tiny $P_a$}] (Pa) at (0,0) {};

\draw[<-,thick,>=stealth] (-2,0)--(0,0);
\draw[->,thick,>=stealth] (0,0)--(0,2);
\draw[-,thick,>=stealth] (-2,1)--(-2,-1);
\draw[-,thick,>=stealth] (-1,2)--(1,2);
\draw[->,thick,>=stealth,blue] (-3,0)--(-2,0);
\draw[<-,thick,>=stealth,blue] (0,2)--(0,3);
\draw[<-,thick,>=stealth,blue] (0,1)--(-1,1) node[left] {\tiny $P_b$};
\end{tikzpicture}
\caption{$P_b$ touches $P_a$ on the left.}
\label{fc22}
\end{subfigure}
\caption{Examples where two paths cannot touch when $P_a$ has a bend.}
\label{fc2}
\end{figure}
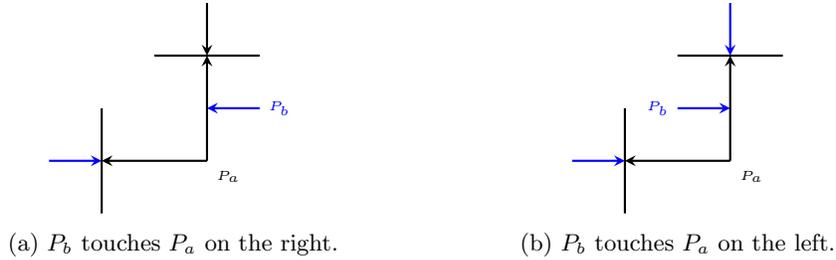

Now suppose that $\mathcal{R}$ contains a grid-point $p$ of type II and assume without loss of generality that $p = p_1$. Since $p_2$ is not a grid-point of type I nor a grid-point of type II, there must exist a path $P$ touching $P_a$ in its interior and a path $P'$ either also touching $P_a$ in its interior or having $p_2$ as an endpoint. 

Let us first assume that $p_1$ is of type II.b. Then, if $P_a$ has no bend, we conclude by Observation \ref{obs1} that $P$ cannot touch one of the paths touching $P_a$ at $p_1$ as the latter is the bend-point and an endpoint of two distinct paths using grid-edges orthogonal to $P_a$. Hence, $P_a$ must have a bend and as previously, we may assume without loss of generality that the horizontal segment of $P_a$ contains $p_1$ and lies to the left and below of its vertical segment. By Observation \ref{obs1}, we know that $P$ cannot touch the vertical segment of $P_a$ from the right nor can it touch the horizontal segment of $P_a$ (see Fig. \ref{fc3}); and since the same holds for $P'$, we conclude by Observation \ref{obs2} that $P$, $P'$ and the path with endpoint $p_1$ using the grid-edge below $p_1$ cannot pairwise touch (see Fig. \ref{fc32} where colors are used to specify which paths cannot pairwise touch).

\begin{figure}[ht]
\centering
\begin{subfigure}[b]{.45\textwidth}
\centering
\begin{tikzpicture}[scale=.7]
\node[invisible,label=below right:{\tiny $P_a$}] (Pa) at (0,0) {};

\draw[<-,thick,>=stealth] (-2,0)--(0,0);
\draw[-,thick,>=stealth] (0,0)--(0,2);
\draw[-,thick,>=stealth] (-2,1)--(-2,-1);
\draw[->,thick,>=stealth] (-2,-1)--(-2,0);
\draw[-,thick,>=stealth] (-3,0)--(-2,0)--(-2,1);
\draw[<-,thick,>=stealth] (0,1)--(-1,1) node[left] {\tiny $P$};
\end{tikzpicture}
\caption{The unique possibility for $P$ to touch $P_a$.}
\label{fc3}
\end{subfigure}
\hspace*{1cm}
\begin{subfigure}[b]{.45\textwidth}
\centering
\begin{tikzpicture}[scale=.7]
\node[invisible,label=below right:{\tiny $P_a$}] (Pa) at (0,0) {};

\draw[<-,thick,>=stealth] (-2,0)--(0,0);
\draw[->,thick,>=stealth] (0,0)--(0,2);
\draw[-,thick,>=stealth] (-2,1)--(-2,-1);
\draw[->,thick,>=stealth,blue] (-2,-1)--(-2,0);
\draw[-,thick,>=stealth] (-3,0)--(-2,0)--(-2,1);
\draw[<-,thick,>=stealth,blue] (0,1)--(-1,1) node[left] {\tiny $P$};
\draw[<-,thick,>=stealth,blue] (0,2)--(-1,2) node[left] {\tiny $P'$};
\end{tikzpicture}
\caption{Example where three paths cannot pairwise touch.}
\label{fc32}
\end{subfigure}
\caption{$P_a$ has a bend and $p_1$ is of type II.b.} 
\end{figure}
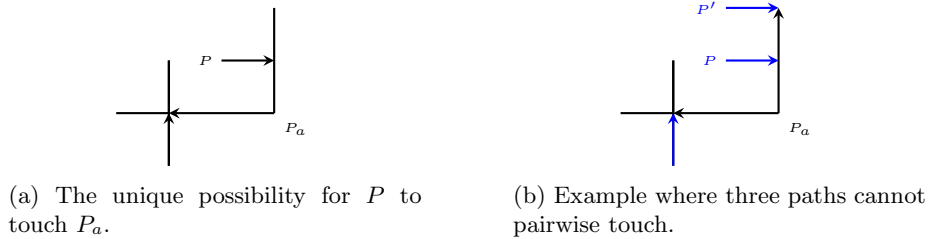

Assume henceforth that $p_1$ is of type II.a and denote by $P_b$ the other path having $p_1$ as an endpoint. If $P_a$ has no bend, by Observation \ref{obs1}, $P'$ must then touch $P_a$ orthogonally as it would otherwise not be able to touch $P_b$. If $P$ lies on the opposite side of $P_a$ than $P'$, it must then have by Observation \ref{obs1}, a common endpoint $p$ with $P'$ belonging to the interior of $P_a$ i.e. $p$ is a grid-point of type II.a; but then, it is clear that $P_b$ can not touch both $P$ and $P'$ (see Fig. \ref{fc8}). 

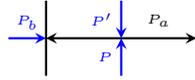
\begin{figure}[ht]
\centering
\begin{tikzpicture}
\draw[<->,thick,>=stealth] (-1,0) -- node[invisible,label=above:{\tiny $P_a$},near end] {} (1,0);
\draw[->,thick,>=stealth,blue] (-1.5,0) -- (-1,0) node[midway,above] {\tiny $P_b$};
\draw[-,thick,>=stealth] (-1,.5) -- (-1,-.5);
\draw[->,thick,>=stealth,blue] (0,-.5) -- (0,0) node[midway,left] {\tiny $P$};
\draw[->,thick,>=stealth,blue] (0,.5) -- (0,0) node[midway,left] {\tiny $P'$};
\end{tikzpicture}
\caption{$P_b$ cannot touch both $P$ and $P'$.}
\label{fc8}
\end{figure}

If now $P$ lies on the same side of $P_a$ as $P'$, we conclude by Observation \ref{obs2} that $P$, $P'$ and $P_b$ cannot pairwise touch (see Fig. \ref{fc4} where colors are used to specify which paths cannot pairwise touch).

\begin{figure}[ht]
\centering
\begin{tikzpicture}
\draw[<-,thick,>=stealth] (-1,0) -- node[invisible,label=above:{\tiny $P_a$},midway] {} (1,0);
\draw[->,thick,>=stealth,blue] (-1.5,0) -- (-1,0) node[midway,below] {\tiny $P_b$};
\draw[-,thick,>=stealth] (-1,.5) -- (-1,-.5);
\draw[->,thick,>=stealth,blue] (0,-.5) -- (0,0) node[midway,right] {\tiny $P$};
\draw[->,thick,>=stealth,blue] (1,-.5) -- (1,0) node[midway,right] {\tiny $P'$};
\end{tikzpicture}
\caption{$P$ lies on the same side of $P_a$ as $P'$.}
\label{fc4}
\end{figure}
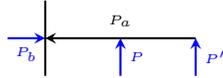

Hence, $P_a$  must have a bend and as previously, we may assume without loss of generality that the horizontal segment of $P_a$ contains $p_1$ and lies to the left  and below of its vertical segment. First assume that $P'$ has $p_2$ as an endpoint. Note that by Observation \ref{obs1}, $P'$ and $P$ cannot touch the vertical segment of $P_a$ from the right. If $P'$ uses the same column as $P_a$, then by Observation \ref{obs1} $P$ cannot touch the horizontal segment of $P_a$ from below; and we conclude by Observation \ref{obs2} that $P$, $P'$ and $P_b$ can then not pairwise touch (see Fig. \ref{fc6} where colors are used to specify which paths cannot pairwise touch).

\begin{figure}
\centering
\begin{subfigure}[b]{.45\textwidth}
\centering
\begin{tikzpicture}[scale=.7]s
\node[invisible,label=below right:{\tiny $P_a$}] (Pa) at (0,0) {};

\draw[<-,thick,>=stealth] (-2,0)--(0,0);
\draw[->,thick,>=stealth] (0,0)--(0,2);
\draw[-,thick,>=stealth] (-2,1)--(-2,-1);
\draw[-,thick,>=stealth] (-2,-1)--(-2,1);
\draw[->,thick,>=stealth,blue] (-3,0)--(-2,0);
\draw[<-,thick,>=stealth,blue] (0,2)--(0,3) node[midway,left] {\tiny $P'$};
\draw[->,thick,>=stealth,blue] (-1,1)--(0,1) node[midway,above] {\tiny $P$};
\end{tikzpicture}
\end{subfigure}
\hspace*{1cm}
\begin{subfigure}[b]{.45\textwidth}
\centering
\begin{tikzpicture}[scale=.7]
\node[invisible,label=below right:{\tiny $P_a$}] (Pa) at (0,0) {};

\draw[<-,thick,>=stealth] (-2,0)--(0,0);
\draw[->,thick,>=stealth] (0,0)--(0,2);
\draw[-,thick,>=stealth] (-2,1)--(-2,-1);
\draw[-,thick,>=stealth] (-2,-1)--(-2,1);
\draw[->,thick,>=stealth,blue] (-3,0)--(-2,0);
\draw[<-,thick,>=stealth,blue] (0,2)--(0,3) node[midway,left] {\tiny $P'$};
\draw[->,thick,>=stealth,blue] (-1,1)--(-1,0) node[midway,left] {\tiny $P$};
\end{tikzpicture}
\end{subfigure}
\caption{$P'$ lies on the same column as $P_a$.}
\label{fc6}
\end{figure}
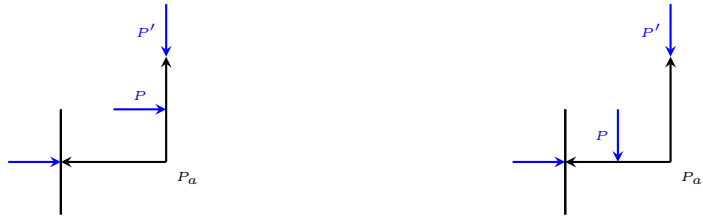

Hence, if $P''$ denotes the path not yet considered, $P''$ must touch the interior of $P_a$. By Observation \ref{obs1}, $P''$ cannot touch the vertical segment of $P_a$ from the right; and if both $P$ and $P''$ touch $P_a$ at a grid-point $p$ belonging to the horizontal segment of $P_a$, i.e. $p$ is a grid-point of type II.a, then it is clear that $P_b$ can not touch both $P$ and $P''$. Consequently, if $\mathcal{R}'$ denotes the representation obtained from $\mathcal{R}$ by deleting the path having $p_1$ as an interior point, then $\mathcal{R}'$ represents $K_5$ and we have by Observations \ref{wui} and \ref{lowerbound}:
\[10 = |E(K_5)| \leq w_P^{\mathcal{R}'} + w_{P''}^{\mathcal{R}'} + w_{P'}^{\mathcal{R}'} + w_{P_a}^{\mathcal{R}'} + w_{P_b}^{\mathcal{R}'} \leq 2(1+\frac{3}{2}) + (\frac{1}{2} + \frac{3}{2}) + 1 + (\frac{1}{2} + 1) = 9.5\]
where $w_Q^{\mathcal{R}'}$ denote the weight with respect to $\mathcal{R}'$ of the vertex whose corresponding path is $Q$. (Note that since $p_1$ is a grid-point of type II, the other endpoint of $P_b$ cannot be a grid-point of type II and recall that $p_2$ is an endpoint of $P'$). Hence, $P'$ does not have $p_2$ as an endpoint and must touch $P_a$ in its interior. But then, by reason of the foregoing, $p_2$ must be an interior point of $P''$. Now, if $P$ touches the horizontal segment of $P_a$ at a grid-point $p$, first assume it is from below. Then $p$ cannot be an endpoint of $P'$ ($P_b$ would otherwise not be able to touch both $P$ and $P'$) and $P'$ can neither touch the horizontal segment of $P_a$ from above by Observation \ref{obs1}, nor can it touch the horizontal segment of $P_a$ from below by Observation \ref{obs2} ($P$, $P'$ and $P_b$ would otherwise not be able to pairwise touch). Hence, $P'$ must touch the vertical segment of $P_a$ (from the left by Observation \ref{obs1}) and we conclude by Observation \ref{obs2} that $P$, $P'$ and $P_b$ can then not pairwise touch. Consequently, $P$ must touch the horizontal segment of $P_a$ from above and as previously, we conclude that $P'$ can then not touch the horizontal segment of $P_a$. Thus, the only possible configuration when $P$ touches the horizontal segment of $P_a$ is as shown in Fig. \ref{fc91} (the role of $P$ and $P'$ may be inverted). Now, if $P$ touches the vertical segment of $P_a$ (it must then be from the left), we may assume, by the foregoing case, that $P'$ also touches the vertical segment of $P_a$ (see Fig. \ref{fc92}).

\begin{figure}[ht]
\centering
\begin{subfigure}[b]{.45\textwidth}
\centering
\begin{tikzpicture}[scale=.7]
\node[invisible,label=below right:{\tiny $P_a$}] (Pa) at (0,0) {};

\draw[<-,thick,>=stealth] (-2,0)--(0,0);
\draw[->,thick,>=stealth] (0,0)--(0,2);
\draw[-,thick,>=stealth] (-2,1)--(-2,-1);
\draw[-,thick,>=stealth] (-2,-1)--(-2,1);
\draw[->,thick,>=stealth] (-3,0)--(-2,0) node[midway,below] {\tiny $P_b$};
\draw[->,thick,>=stealth] (-.8,1)--(0,1);
\draw[->,thick,>=stealth] (-1,.8)--(-1,0);
\draw[-,thick,>=stealth] (-1,2)--(1,2) node[midway,above] {\tiny $P''$};
\end{tikzpicture}
\caption{The unique possibility when a path touches the horizontal segment of $P_a$.}
\label{fc91}
\end{subfigure}
\hspace*{1cm}
\begin{subfigure}[b]{.45\textwidth}
\centering
\begin{tikzpicture}[scale=.7]
\node[invisible,label=below right:{\tiny $P_a$}] (Pa) at (0,0) {};

\draw[<-,thick,>=stealth] (-2,0)--(0,0);
\draw[->,thick,>=stealth] (0,0)--(0,2);
\draw[-,thick,>=stealth] (-2,1)--(-2,-1);
\draw[-,thick,>=stealth] (-2,-1)--(-2,1);
\draw[->,thick,>=stealth] (-3,0)--(-2,0) node[midway,below] {\tiny $P_b$};
\draw[->,thick,>=stealth] (-1,1.25)--(0,1.25) node[midway,above] {\tiny $P'$};
\draw[->,thick,>=stealth] (-1,.75)--(0,.75) node[midway,below] {\tiny $P$};
\draw[-,thick,>=stealth] (-1,2)--(1,2) node[midway,above] {\tiny $P''$};
\end{tikzpicture}
\caption{The unique possibility when no path touches the horizontal segment of $P_a$.}
\label{fc92}
\end{subfigure}
\caption{The only possible configurations when $p_2$ is an interior point of $P''$.}
\label{fc9}
\end{figure}
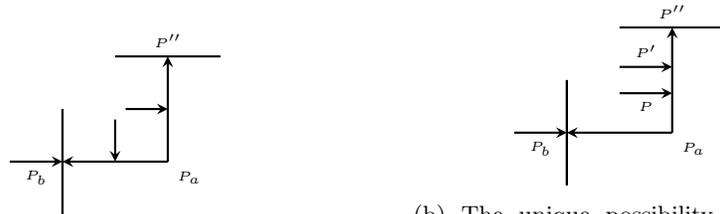

In both cases (see Fig. \ref{fc9}), $P''$ has an endpoint lying to the right of the vertical segment of $P_a$ which can only belong to the path having $p_1$ as an interior point. Hence, if $\mathcal{R}'$ denotes the representation obtained from $\mathcal{R}$ by deleting this path, then $\mathcal{R}'$ represents $K_5$ and we have by Observations \ref{wui} and \ref{lowerbound} (note that since $p_1$ is a grid-point of type II, the other endpoint of $P_b$ cannot be a grid-point of type II):
\[10 = |E(K_5)| \leq w_P^{\mathcal{R}'} + w_{P'}^{\mathcal{R}'} + w_{P''}^{\mathcal{R}'} + w_{P_a}^{\mathcal{R}'} + w_{P_b}^{\mathcal{R}'} \leq 2(1+\frac{3}{2}) + \frac{3}{2} + 2(\frac{1}{2} + 1) = 9.5\]
a contradiction which concludes the proof of Claim 2. 
\end{cproof}\\

It now follows from Observation \ref{wui} that the weight of every vertex with respect to $\mathcal{R}$ is at most 2. But then, we have by Observation \ref{lowerbound}:
\[ 15 = |E| \leq \sum\limits_{u \in V} w_u \leq 2|V| = 12\]
Hence, $K_6$ cannot be $B_1$-CPG. \qed 

\subsection{An illustration of the construction in the proof of Theorem \ref{thm:3c}}
\label{app:ex}

In Fig. \ref{fullexample} is given an example of how the representation constructed in the proof may be locally. Note that edges are dealt with arbitrarily and thus $u$ may correspond either to a $u_1$, as for instance for edge $uw$, or a $u_k$ with $k >1$, as for instance for edge $uv$.

\begin{figure}[htb]
\centering
\begin{subfigure}[b]{.2\textwidth}
\centering
\begin{tikzpicture}[scale=.1]
\node[invisible] at (0,-14) {};
\node[circ,label=left:{$v$}] at (0,0) {};
\node[circ,label=above right:{$u$}] at (12,5) {};
\node[circ] at (20,5) {};
\node[circ,label=right:{$w$}] at (21,17) {};
\node[circ] at (12,-2) {};
\draw[-] (0,0) -- (6,0) -- (6,5) -- (20,5);
\draw[-] (12,-2) -- (12,10) -- (17,10) -- (17,14) -- (21,14) -- (21,17);
\end{tikzpicture}
\caption{Local embedding of $G$.}
\end{subfigure}
\hspace*{.5cm}
\begin{subfigure}[b]{.7\textwidth}
\centering
\begin{tikzpicture}[scale=.4]
\node[circ,label=above right:{$u$}] at (12,5) {};
\node[circ] at (20,5) {};
\node[circ,label=right:{$w$}] at (21,17) {};
\node[circ] at (12,-2) {};
\node[circ,label=left:{$v$}] at (0,0) {};
\draw[thick,dashed,red] (0,1.5) -- (4.5,1.5) -- (4.5,6.5) -- (10.5,6.5) -- (10.5,11.5) -- (15.5,11.5) -- (15.5,15.5) -- (19.5,15.5) -- (19.5,16);
\draw[thick,dashed,red] (0,-1.5) -- (7.5,-1.5) -- (7.5,3.5) -- (10.5,3.5) -- (10.5,-1);
\draw[thick,dashed,red] (13.5,-1) -- (13.5,3.5) -- (20,3.5);
\draw[thick,dashed,red] (20,6.5) -- (13.5,6.5) -- (13.5,8.5) -- (18.5,8.5) -- (18.5,12.5) -- (22.5,12.5) -- (22.5,16);
\draw[-,thick,blue] (0,-1.7) -- (0,1.7);
\draw[thick,dotted,blue] (0,-2.2) -- (0,2.2);
\draw[<->,thick,>=stealth,blue] (6,-1.5) -- (6,6.5);
\draw[<->,thick,>=stealth,blue] (12,1.5) -- (12,11.5);
\draw[->,thick,>=stealth,blue] (12,-2) -- (12,1.5);
\draw[thick,dotted,blue] (12,-2) -- (12,-2.5);
\draw[-,thick,blue] (20,3.3) -- (20,6.7);
\draw[thick,dashed,blue] (20,2.8) -- (20,7.2);
\draw[<->,thick,>=stealth,blue] (17,8.5) -- (17,15.5);
\draw[<-,thick,>=stealth,blue] (21,12.5) -- (21,17.5);
\draw[thick,dotted,blue] (21,17.5) -- (21,18);

\draw[<->,thick,>=stealth] (0,-.2) -- (3,-.2);
\draw[<->,thick,>=stealth] (3,-.2) -- (6,-.2);
\draw[<->,thick,>=stealth] (0,-.7) -- (6,-.7);
\draw[<->,thick,>=stealth] (0,-1.2) -- (3,-1.2);
\draw[<->,thick,>=stealth] (3,-1.2) -- (6,-1.2);
\draw[<->,thick,>=stealth] (3,-.2) -- (3,-.7);
\draw[<->,thick,>=stealth] (3,-.7) -- (3,-1.2);

\draw[<->,thick,>=stealth] (4.8,5.2) -- (4.8,6.2);
\draw[<->,thick,>=stealth] (5.4,5.2) -- (5.4,5.7);
\draw[<->,thick,>=stealth] (5.4,5.7) -- (5.4,6.2);
\draw[<->,thick,>=stealth] (4.8,5.2) -- (5.4,5.2);
\draw[<->,thick,>=stealth] (5.4,5.2) -- (6,5.2);
\draw[<->,thick,>=stealth] (4.8,5.7) -- (6,5.7);
\draw[<->,thick,>=stealth] (4.8,6.2) -- (5.4,6.2);
\draw[<->,thick,>=stealth] (5.4,6.2) -- (6,6.2);
\draw[<->,thick,>=stealth] (6,5.2) -- (12,5.2);
\draw[<->,thick,>=stealth] (6,5.7) -- (12,5.7);
\draw[<->,thick,>=stealth] (6,6.2) -- (12,6.2);

\draw[<->,thick,>=stealth] (12,3.8) -- (16,3.8);
\draw[<->,thick,>=stealth] (16,3.8) -- (20,3.8);
\draw[<->,thick,>=stealth] (12,4.3) -- (20,4.3);
\draw[<->,thick,>=stealth] (12,4.8) -- (16,4.8);
\draw[<->,thick,>=stealth] (16,4.8) -- (20,4.8);
\draw[<->,thick,>=stealth] (16,3.8) -- (16,4.3);
\draw[<->,thick,>=stealth] (16,4.3) -- (16,4.8);

\draw[<->,thick,>=stealth] (12,10.2) -- (14.5,10.2);
\draw[<->,thick,>=stealth] (14.5,10.2) -- (17,10.2);
\draw[<->,thick,>=stealth] (12,10.7) -- (17,10.7);
\draw[<->,thick,>=stealth] (12,11.2) -- (14.5,11.2);
\draw[<->,thick,>=stealth] (14.5,11.2) -- (17,11.2);
\draw[<->,thick,>=stealth] (14.5,10.2) -- (14.5,10.7);
\draw[<->,thick,>=stealth] (14.5,10.7) -- (14.5,11.2);

\draw[<->,thick,>=stealth] (15.8,14.2) -- (15.8,15.2);
\draw[<->,thick,>=stealth] (16.4,14.2) -- (16.4,14.7);
\draw[<->,thick,>=stealth] (16.4,14.7) -- (16.4,15.2);
\draw[<->,thick,>=stealth] (15.8,14.2) -- (16.4,14.2);
\draw[<->,thick,>=stealth] (16.4,14.2) -- (17,14.2);
\draw[<->,thick,>=stealth] (15.8,14.7) -- (17,14.7);
\draw[<->,thick,>=stealth] (15.8,15.2) -- (16.4,15.2);
\draw[<->,thick,>=stealth] (16.4,15.2) -- (17,15.2);
\draw[<->,thick,>=stealth] (17,14.2) -- (21,14.2);
\draw[<->,thick,>=stealth] (17,14.7) -- (21,14.7);
\draw[<->,thick,>=stealth] (17,15.2) -- (21,15.2);
\end{tikzpicture}
\caption{The corresponding local representation of $G'$.}
\end{subfigure}
\caption{An example of the transformations for a vertex $u$ of $V$.}
\label{fullexample}
\end{figure}
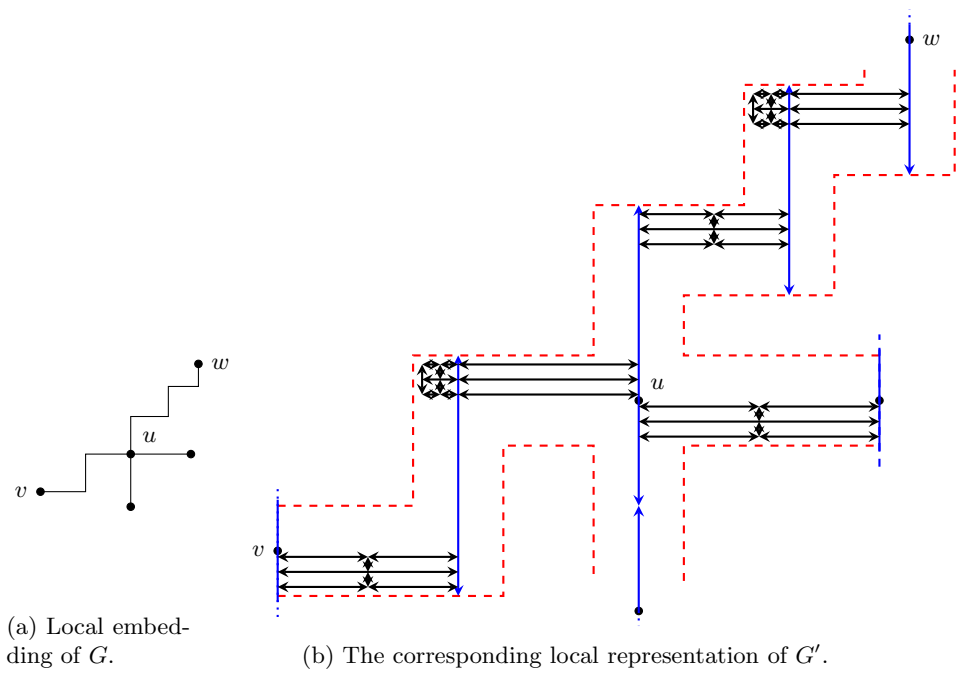

\end{document}